
\documentclass{article}

\usepackage{microtype}
\usepackage{graphicx}
\usepackage{subfigure}
\usepackage{booktabs} 

\usepackage{hyperref}
\usepackage{verbatim}


\usepackage{array, makecell, multirow}
\newcolumntype{?}{!{\vrule width 1.2pt}}
\usepackage[table]{xcolor}


 \usepackage[accepted]{icml2023}

\usepackage{amsmath}
\usepackage{amssymb}
\usepackage{mathtools}
\usepackage{amsthm}

\usepackage[capitalize,noabbrev]{cleveref}

\theoremstyle{plain}
\newtheorem{theorem}{Theorem}[section]

\newtheorem{lemma}[theorem]{Lemma}

\theoremstyle{definition}
\newtheorem{definition}[theorem]{Definition}

\theoremstyle{remark}

\usepackage[textsize=tiny]{todonotes}

\icmltitlerunning{Fair $k$-Center: a Coreset Approach}

\usepackage{tablefootnote}




\newcommand{\Pot}{\mbox{Pot}}
\newcommand{\COL}{\mbox{\rm Col}}
\begin{document}

\twocolumn[
\icmltitle{Fair $k$-Center: a Coreset Approach in Low Dimensions}



\icmlsetsymbol{equal}{*}

\begin{icmlauthorlist}
\icmlauthor{Jinxiang Gan}{Hong Kong University of Science and Technology}
\icmlauthor{Mordecai Jay Golin}{Hong Kong University of Science and Technology}
\icmlauthor{Zonghan Yang}{Shanghai Jiao Tong University}
\icmlauthor{Yuhao Zhang}{Shanghai Jiao Tong University}

\end{icmlauthorlist}

\icmlaffiliation{Hong Kong University of Science and Technology}{Hong Kong University of Science and Technology, Hong Kong, China}

\icmlaffiliation{Shanghai Jiao Tong University}{Shanghai Jiao Tong University, Shanghai, China}

\icmlcorrespondingauthor{Jinxiang Gan}{jganad@connect.ust.hk}

\icmlkeywords{Machine Learning, ICML}

\vskip 0.3in
]



\printAffiliationsAndNotice{}  

\begin{abstract}

{Center-based clustering techniques are fundamental in some areas of machine learning such as data summarization.
Generic $k$-center algorithms can produce biased cluster representatives so there has been a recent interest in {\em fair} $k$-center clustering. Our main theoretical contributions are two new $(3+\epsilon)$-approximation algorithms for solving the fair $k$-center problem in (1) the dynamic incremental, i.e., one-pass streaming, model and (2) the MapReduce model. Our dynamic incremental algorithm is the first such algorithm for this problem (previous streaming algorithms required two passes) and our MapReduce one improves upon the previous approximation factor of $(17+\epsilon).$
Both algorithms work by maintaining a small {\em coreset} to represent the full point set and their analysis requires that the underlying metric has finite-doubling dimension. We also provide related heuristics for  higher dimensional data and experimental results that compare the performance  of our algorithms to existing ones.}

\end{abstract}

    \section{Introduction}
    Data summarization is one of the most important problems in the area of machine learning. Its goal 
    is to compute a small set of data  which captures the key features of the original data set. 
    Performing further work, e.g., running  machine learning algorithms, on this small summary data set can be more efficient but almost as effective as  running them on the whole set.

    One issue with  standard data summarization algorithms is  that they often produce a summary which is {\em non-representative} of other aspects of the population  as a whole, e.g., they are biased with respect to attributes such as gender, race, and age (see, e.g., \cite{kay2015unequal}).
    and is therefore {\em unfair}. There has been much recent work in trying to alleviate this problem by developing technique for fair representation, in particular {\em fair $k$ center} (see e.g. \cite{kleindessner2019fair,chiplunkar2020solve,jones2020fair,angelidakis2022fair}).


    Going further, there is also interest in solving the fair $k$-center problem for {\em large} data sets, either using streaming algorithms (for one processor) or a large number of processors in parallel. That is the problem that we address in this paper.  In particular, we revisit the streaming and Map-reduce problems addressed in \cite{chiplunkar2020solve} and develop a new {\em coreset} based approach for metric spaces that have fixed doubling dimension  (Defined in  \cref{sec:notation}). This provides both better theoretical results and, in most practical examples, real performance.


    \subsection{Definition of the Fair $k$-Center Problem}
    Let $(X,d)$ denote a  metric space and   $P\subset X$ be a set of points. Each point in $P$ belongs to exactly one of $m$ groups,  $\{1,...,m\}$. Let $g:X\rightarrow \{1,...,m\}$ denote the group assignment function. Each group $j$, has an associated  fixed capacity $k_j$ and  $k = \sum_{j=1}^m k_j$ . A (center) subset $S\subset P$ is called {\em feasible} if for every $j$, set $S$ contains at most $k_j$ points from group $j$. The goal is to compute a feasible set $S$ of centers so as to minimize $C(S)=\max_{p\in P} \min_{s\in S}d(p,s)$. 
    $C(S)$ is called the cost of solution $S$.

    The special case, $m=1$, is  the well known and studied $k$-center problem.
    
    Let $OPT$ denote the cost of an  optimal solution. An $\rho$-approximation algorithm for the problem  would find a feasible set of centers $C',$ such that $C(S') \le \rho\cdot OPT$. 

    In particular,  it is known that the plain  $k$-center problem ($m=1$)  is NP-hard to $(2-\epsilon)$-approximate \cite{hsu1979easy} for any $\epsilon >0,$  while there do exist some well known  2-approximation algorithms \cite{gonzalez1985clustering,hochbaum1985best} for solving it. 

    This paper studies the fair $k$-center problem in the MapReduce and streaming setttings.
    The {\em MapReduce } model was  introduced by Google \cite{dean2008mapreduce}. In this setting, a set of processors process data in a sequence of parallel rounds on a large number of machines, each with only limited memory. In addition, only small amounts of inter-machine communication are permitted.
    The {\em streaming } model provides a mechanism to deal with large volumes of data in a limited-memory single-core processor by  restricting access to sequential passes over the data (with only  a limited amount of other working memory available). In particular, a {\em one-pass} streaming algorithm  may only see each piece of data once. One-pass streaming algorithms are essentially dynamic incremental algorithms that are only permitted limited working memory.

    \cite{chiplunkar2020solve} study the fair $k$ center problem in those two models. They show a  $(3+\epsilon)$-approximation  two-pass streaming algorithm and a  $(17+\epsilon)$-approximation MapReduce algorithm. 
    
    {In practice, it is known  that the metrics  in many real-world datasets possess  {\em finite doubling dimension}  (see definition in \cref{sec:notation}) \cite{talwar2004bypassing}. 
    Assuming finite doubling-dimension we develop better algorithms for the same problems. More specifically, 
    
    \begin{itemize}
        \item we show a deterministic $(3+\epsilon)$-approximation {\em one-pass} streaming algorithm.  Unlike the best known
        $(3+\epsilon)$-approximation {\em two-pass} streaming algorithm of \cite{chiplunkar2020solve}
        this only accesses  each  data point  once and is actually  a  {\em dynamic incremental algorithm.}
        Although \cite{charikar2004incremental}  provides a dynamic incremental algorithms for the standard $k$-center problem, ours is the first such algorithm for the fair $k$-center one.
        
        \item we show a deterministic  $(3+\epsilon)$-approximation MapReduce algorithm which  theoretically  {and practically} improves upon the $(17+\epsilon)$ approximation algorithm in \cite{chiplunkar2020solve}. Our MapReduce algorithm only has {one communication round.}
        After each processor preprocesses its own internal data it sends a small summary to the coordinator. Combining the summaries of all processors,  the coordinator can generate a solution with a good  global approximation ratio.

        \item we run experiments to illustrate the practicality of our algorithms in both settings. More specifically, our theoretical guarantees only hold for fixed doubling-dimension, i.e., in low dimensions, so we also developed practical heuristics based on our algorithms  that work in higher dimensions {and  ran  experiments on them  using the same  data upon which  \cite{chiplunkar2020solve} was tested (including some  high-dimensional data sets) and provide a comparison.} 
        
        
    \end{itemize}

    Our main tool is the coreset approach (also used by  \cite{ceccarello2019solving} to attack $k$-centers with outliers).

    {We conclude by noting that \cite{chiplunkar2020solve} proved that achieving a $(4-\epsilon)$-approximation to $k$-center in the MapReduce model with limited communication complexity is NP-hard.  The reason our $(3+\epsilon)$-approximation  does not violate their bound is that their proof assumed a general metric, while our algorithms assume metrics with bounded doubling dimension.}
    

    \subsection{Related Works}
    \cite{chen2016matroid} developed a  3-approximation algorithm that ran in  $O(n^2 \log n)$ time.
  \cite{kleindessner2019fair} then give a linear time algorithm with approximation ratio  $O(2^m)$, where $m$ is the number of groups in  the input. Finally, \cite{jones2020fair}  developed a faster, $O(nk)$ time, $3$-approximation algorithm. 
  {Note that 
  $3$ is still the best approximation factor known.}
  
   {Around the same time, \cite{chiplunkar2020solve} presented the previously discussed   $(3+\epsilon)$-approximation two-pass streaming algorithm and a $(17+\epsilon)$-Mapreduce algorithm.  \cite{yuan2021distributed} study the fair $k$ center problem {\em with outliers} and described  a $4$-approximation algorithm along with an  $18$-approximation distributed algorithm. Very recently, Angelidakis et al. \cite{angelidakis2022fair} combined the fairness constraint with a privacy constraint and proposed a new model called the {\em private and representative $k$-center} where the privacy constraint means that every selected center has to cover at least a given amount of data. They designed a $15$-approximation algorithm for this new model.}

    To conclude, we note that a  different fairness constraint is studied in \cite{chierichetti2017fair}, where the solution requires that proportion of groups in each cluster must be similar to that in the whole. Some other related works using this other fairness constraint can be found in \cite{bera2019fair,bercea2019cost,bera2022fair}.

    \section{Notation and Terminology}\label{sec:notation}
 $P$ will always denote a finite point set in some underlying known metric space $(\mathcal{X},d).$  
    \begin{definition}
       $\mathcal{T}\subset 2^P$ is a partition of   $P$ if (1) $P=\bigcup_{S\in \mathcal{T}}S$; and (2) $\forall S_1,S_2\in \mathcal{T}$, $S_1\cap S_2=\emptyset$ 
    \end{definition}
    
    Our results assume that the underlying metric space $(\mathcal{X},d)$   has finite {\em doubling dimension}.

    \begin{definition}
        [Doubling Dimensions] The doubling dimension of  metric space $(\mathcal{X},d)$ is the minimum value $\dim(\mathcal{X})$ such that any ball $B(x,r)$ in $(\mathcal{X},d)$ can be covered by $2^{\dim(\mathcal{X})}$ balls of radius $r/2$. 
    \end{definition}

    It is known that the doubling dimension of the Euclidean space $(R^D,\ell_2)$ is $\Theta(D)$ \cite{heinonen2001lectures}. 
    
    
    \begin{lemma}\cite{krauthgamer2004navigating}\label{lem:doubling}
        Let $(\mathcal{X},d)$ be a metric space and $Y\subseteq \mathcal{X}$.
        The {\em aspect ratio} of the metric induced on $Y$ is $\frac{\max_{x,y\in Y}d(x,y)}{\min_{x,y\in Y}d(x,y)}$.
        
        If the aspect ratio of $Y$ is at most $\Delta$ and $\Delta\geqslant 2$, then $|Y|\leqslant \Delta^{O(\dim(\mathcal{X}))}$.
        \end{lemma}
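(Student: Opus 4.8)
The plan is to run a standard \emph{packing argument}: iterate the doubling property to cover $Y$ by exponentially many tiny balls, each of which can hold at most one point of $Y$ because the points of $Y$ are well separated.

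Concretely, write $a=\min_{x,y\in Y}d(x,y)$ and $b=\max_{x,y\in Y}d(x,y)$, so that $1\le b/a\le\Delta$ (the left inequality is trivial, the right one is the aspect-ratio hypothesis). Fix an arbitrary $y_0\in Y$; then $Y\subseteq B(y_0,b)$. I would now apply the definition of doubling dimension repeatedly: $B(y_0,b)$ is covered by $2^{\dim(\mathcal{X})}$ balls of radius $b/2$, each of those by $2^{\dim(\mathcal{X})}$ balls of radius $b/4$, and so on, so after $t$ rounds $Y$ is covered by a family $\mathcal{B}$ of at most $2^{t\dim(\mathcal{X})}$ balls of radius $b/2^{t}$. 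Choose $t=\lceil \log_2(2b/a)\rceil$, which forces $b/2^{t}<a/2$, hence every ball in $\mathcal{B}$ has diameter strictly less than $a$.

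The last step is to observe that a set of diameter $<a$ contains at most one point of $Y$: any two distinct points of $Y$ are at distance at least $a$ by the definition of $a$. Therefore $|Y|\le|\mathcal{B}|\le 2^{t\dim(\mathcal{X})}$. Finally I bound the exponent: since $b/a\le\Delta$ we have $t=\lceil\log_2(2b/a)\rceil\le 2+\log_2\Delta$, and the hypothesis $\Delta\ge 2$ gives $\log_2\Delta\ge 1$, so $t\le 3\log_2\Delta$. Plugging in,
\[
|Y|\ \le\ 2^{t\dim(\mathcal{X})}\ \le\ 2^{3\log_2(\Delta)\,\dim(\mathcal{X})}\ =\ \Delta^{3\dim(\mathcal{X})}\ =\ \Delta^{O(\dim(\mathcal{X}))},
\]
which is the claimed bound.

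There is no serious obstacle here; the only point requiring a little care is making sure the constants line up — in particular that the number of halving rounds $t$ is genuinely $O(\log\Delta)$, which is exactly where the assumption $\Delta\ge 2$ is used so that the additive constant in $t$ can be absorbed into the $O(\cdot)$. Everything else (the covering is finite at each round, diameter shrinks geometrically, a small ball holds one point) is immediate from the definitions of doubling dimension and of the aspect ratio.
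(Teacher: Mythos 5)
The paper does not actually prove this lemma --- it imports it directly from Krauthgamer and Lee (2004) --- so there is no in-paper argument to compare against; your packing argument is exactly the standard proof of this fact and is correct. One tiny quibble: with $t=\lceil \log_2(2b/a)\rceil$ you only get $b/2^{t}\leqslant a/2$, i.e.\ diameter $\leqslant a$ rather than $<a$, so a ball could in principle contain two points of $Y$ at distance exactly $a$; taking one extra halving round (or targeting radius $a/4$) fixes this and changes nothing in the $\Delta^{O(\dim(\mathcal{X}))}$ bound. Everything else --- the iterated doubling cover, the separation of $Y$ forcing one point per small ball, and the absorption of the additive constant in $t$ using $\Delta\geqslant 2$ --- is exactly as it should be.
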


    In the sequel, $kCP$ and $FkCP$ respectively denote the $k$-center problem and fair-$k$ center problems. For  $P \subseteq \mathcal{X},$ 
  $r^*_{kC}(P)$ 
    and $r^*_{FkC}(P)$ 
    respectively denote the optimal values of $kCP$ and $FkCP.$
    Trivially, 
    $r^*_{kC}(P)\leqslant r^*_{FkC}(P)$.    

    \section{Coreset Technique}
\label{sec:coreset}
        
    {The coreset paradigm is a well known and powerful tool for  studying large data sets by summarizing them using smaller ones.  For  $k$-centers,  a variant has previously been used to attack the $k$ center problem  with outliers \cite{ceccarello2019solving,ding2023randomized}). }
    \begin{definition}[Coreset]
        For  $P\subset X$,  subset $C\subset P$ is an $\epsilon$-coreset of $P$ for $FkCP$, if for every  feasible set $S\subset P$ of points, 
        $$(1-\epsilon)\max_{p\in P}d(p,S)\leqslant \max_{p\in C}d(p,S)\leqslant (1+\epsilon)\max_{p\in P}d(p,S).$$
    \end{definition}
    
    $\epsilon$-coresets will be small subsets that approximate the original set. More specifically, we will see later, that solving the  $FkCP$ on an $\epsilon$-coreset of $P$  will, with some extra information,  yield an approximate solution for $P.$
    
    We will first  need further definitions.
    \begin{definition}[$(r,\alpha)$-net] 
        Let $(\mathcal{X},d)$ be a metric space. For fixed parameter $r>0$, subset $Y\subseteq \mathcal{X}$ is an $(r,\alpha)$-net of $\mathcal{X}$ if it satisfies:\\
        $\bullet$ {(\em Packing Property:)} For every $x,y \in Y$, $d(x,y)\geqslant  r$;\\
        $\bullet$ {(\em  Covering Property:)} $\forall x\in \mathcal{X}$, there exists at least one $y\in Y$ such that $d(x,y)\leqslant \alpha\cdot r$. 
    \end{definition}
    When $\alpha=1$, this is the well known $r$-net from \cite{heinonen2001lectures}.

    In $FkCP$, the covering property will permit building an $\epsilon$-coreset  from an $(r,\alpha)$-net while the packing property restricts the number of points in the $(r,\alpha)$-net.
 
    \begin{lemma}\label{lem:net} Fix $P$ and let 
        $r'\leqslant r^*_{FkCP}(P)$. If
        $Y \subset P$ is an $(\frac{\epsilon}{\alpha} r',\alpha)$-net,  
        then $Y$ is an $\epsilon$-coreset of $P.$ (see proof in appendix)
    \end{lemma}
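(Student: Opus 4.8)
The plan is to check the two defining inequalities of an $\epsilon$-coreset for an arbitrary feasible set $S\subset P$, abbreviating $C(S)=\max_{p\in P}d(p,S)$. The upper bound will be essentially free: since $Y\subseteq P$ we immediately get $\max_{p\in Y}d(p,S)\le\max_{p\in P}d(p,S)=C(S)\le(1+\epsilon)C(S)$. Note that the packing property of the net plays no role in the coreset guarantee itself; it only enters later (through \cref{lem:doubling}) to bound $|Y|$.

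For the lower bound I would first pick a witness $p^{*}\in P$ with $d(p^{*},S)=C(S)$, and then invoke the covering property of the $(\tfrac{\epsilon}{\alpha}r',\alpha)$-net to obtain some $y\in Y$ with $d(p^{*},y)\le\alpha\cdot\tfrac{\epsilon}{\alpha}r'=\epsilon r'$. A single triangle-inequality step, routed through a nearest center $s\in S$ to $y$, then gives $C(S)=d(p^{*},S)\le d(p^{*},s)\le d(p^{*},y)+d(y,S)\le\epsilon r'+d(y,S)$, so that $\max_{p\in Y}d(p,S)\ge d(y,S)\ge C(S)-\epsilon r'$.

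The last step, and the only place where the hypothesis $r'\le r^{*}_{FkCP}(P)$ is used, is to convert this additive slack into the multiplicative form the definition demands: because $S$ is feasible its cost is at least the optimum, so $C(S)\ge r^{*}_{FkCP}(P)\ge r'$, whence $\epsilon r'\le\epsilon\,C(S)$ and therefore $\max_{p\in Y}d(p,S)\ge C(S)-\epsilon r'\ge(1-\epsilon)C(S)$. Since $S$ was an arbitrary feasible set, this shows $Y$ is an $\epsilon$-coreset of $P$.

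I do not expect a genuine obstacle here — the argument is short — but there are two things to keep straight. First, the bound $C(S)\ge r'$ must hold uniformly over all feasible $S$, which is precisely why it is phrased through the global optimum $r^{*}_{FkCP}(P)$ rather than through any particular solution. Second, one must choose the net radius to be exactly $\tfrac{\epsilon}{\alpha}r'$ so that, multiplied by the covering factor $\alpha$, the covering guarantee becomes exactly $\epsilon r'$; this matching is what makes the final estimate land on $(1-\epsilon)C(S)$ rather than something weaker.
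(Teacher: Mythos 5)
Your proof is correct and follows essentially the same route as the paper's: the upper bound is immediate from $Y\subseteq P$, and the lower bound uses the covering property on the farthest point, one triangle inequality through the nearest center, and the chain $r'\leqslant r^*_{FkC}(P)\leqslant C(S)$ to turn the additive slack $\epsilon r'$ into the multiplicative factor $(1-\epsilon)$. No gaps.
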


    While $\epsilon$-coresets  as described do approximate $P$, they have lost all group information. To remedy this, we need the further definitions.

    
    \begin{definition}\label{def:proper} 
        Let $P$ be fixed and $Y \subset P$ be an  $(r,\alpha)$-net. $Y$  is called $\epsilon$-proper if 
$r\leqslant \frac{\epsilon}{\alpha} r^*_{FkC}(P)$.

        Further, for all $y \in Y$ associate  a 
        neighborhood set 
        $N(y,r)$ 
        such that\\
        \ \quad $\bullet$ $y \in N(y,r)$\\
        \ \quad $\bullet$ If $p \in N(y,r)$, $d(p,y) \le \alpha r.$\\
        \ \quad $\bullet$ $\mathcal{T}=\{N(y,r): y\in Y\}$ is a partition of $P$

        

Such a $\mathcal T$ always exists due to the covering property of $(r,\alpha)$-nets but might not be unique.  When discussing $\epsilon$-proper nets, we always assume an associated partition $\mathcal T.$
  \end{definition}  
    
    Note that  $C \subset Y$ might not be a coreset because it doesn't contain the correct number of points from each group.  In that case, if $Y$ is proper, we will be able to replace a point in $y \in C$ with a point in $N(y,r)$ that is close by.
    \begin{definition}\label{def:coreset_of_faircenter} Fix $P$ and let $Y$ be an 
       $\epsilon$-proper $(r,\alpha)$-net $Y$.
       With every point $x \in Y$ associate 
       an $m$ dimensional vector 
       $\COL(x)=(\COL_1(x),\COL_2(x),...,\COL_m(x))$  defined by
        $$ 
        \COL_i(x)=\left\{
            \begin{aligned}
            &1 &\quad &  i \in \{g(p) \,:\, p \in  N(x,r)\}  \\
            &0   & &\text{otherwise}
            \end{aligned}
        \right.
        $$
       Furthermore, when $\COL_i(x)=1$, a point $y\in N(x,r)$ from group $i$ will be stored in
       $\Pot(x,i)$ as follows: 
        $$ 
        \Pot(x,i)=\left\{
            \begin{aligned}
            &x  & &i=g(x)\\
            &\mbox{undefined} &   & \COL_i(x)=0  \\
            & \text{any point $p\in N(x,r)$  with $g(p)=i$} &  & \text{Otherwise}
            \end{aligned}
        \right.
        $$
     Finally, define 
     $\Pot(x)=\{\Pot(x,i)\,:\, \COL_i(x)=1\}.$
    \end{definition}

    We require one further definition
    \begin{definition}\label{def:candidate}
    Fix $P.$ Let $Y \subset P$ be $\epsilon$-proper.
        We say  $y$ is from group $i$ if $\COL_i(y)=1$ ($\forall 1\leqslant i\leqslant m$). 
        
         $S\subset Y$ is a { \em candidate feasible solution} of $Y$ if there exists a feasible set $S'\subseteq P$ such that \\
          $\bullet$ $S'\subseteq \bigcup_{s\in S}N(s,r)$\\
         $\bullet$ $|S'\cap N(s,r)|=1 \quad \forall s\in S$\\
        Note that $|S'|=|S|$. We define the cost of the candidate feasible solution  for $Y$ is $\max_{y\in Y}d(y,S)$.
    \end{definition}
    
   \begin{lemma}  \label{lem:CandHelper}
    Fix $P$ and let $Y$ be an $\epsilon$-proper $(r,\alpha)$-net.
      
      { If there exists a candidate feasible solution $S$ with cost $c$  and associated feasible $S'$ as defined in \cref{def:candidate}
      then $S'$ is a feasible solution in $P$ with 
      $ C(S') \le c + 2\epsilon r^*_{FkC}(P)$}
      (see proof in appendix)
   \end{lemma}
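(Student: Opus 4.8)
The plan is to reduce everything to a single three-hop triangle inequality. Fix an arbitrary point $p\in P$; I want to exhibit a point of $S'$ within distance $c+2\alpha r$ of $p$, and then convert $2\alpha r$ into $2\epsilon r^*_{FkC}(P)$ using the $\epsilon$-proper hypothesis. Since $S'$ is feasible by the very statement of \cref{def:candidate}, feasibility needs no separate argument; only the cost bound $C(S')\le c+2\epsilon r^*_{FkC}(P)$ has to be established.

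The hops I would use are: (i) from $p$ to its net representative, (ii) from that representative to a nearby center of $S$, and (iii) from that center to its partner in $S'$. For (i), because the associated partition $\mathcal T=\{N(y,r):y\in Y\}$ covers $P$ (equivalently, by the covering property of the $(r,\alpha)$-net), there is some $y\in Y$ with $p\in N(y,r)$, and the second bullet of the neighborhood definition in \cref{def:proper} gives $d(p,y)\le\alpha r$. For (ii), by the definition of the cost of the candidate solution, $d(y,S)=\min_{s\in S}d(y,s)\le c$, so we may pick $s\in S$ with $d(y,s)\le c$. For (iii), since $S$ is a candidate feasible solution with associated $S'$, there is a unique point $s'\in S'\cap N(s,r)$; as $s'\in N(s,r)$, again the neighborhood property gives $d(s',s)\le\alpha r$. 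Chaining, $d(p,S')\le d(p,s')\le d(p,y)+d(y,s)+d(s,s')\le \alpha r+c+\alpha r=c+2\alpha r$.

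Finally, $Y$ being $\epsilon$-proper means $r\le\frac{\epsilon}{\alpha}r^*_{FkC}(P)$, i.e.\ $\alpha r\le\epsilon r^*_{FkC}(P)$, so $d(p,S')\le c+2\epsilon r^*_{FkC}(P)$; taking the maximum over $p\in P$ yields $C(S')\le c+2\epsilon r^*_{FkC}(P)$, which is the claim.

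There is no substantial obstacle here: the argument is a routine metric computation. The only points requiring care are bookkeeping ones — making sure each $\le\alpha r$ step is justified by the correct bullet of the neighborhood definition (and, for hop (i), that the covering property is what supplies the representative $y$), and that the conversion $\alpha r\le\epsilon r^*_{FkC}(P)$ invokes the $\epsilon$-proper definition with the $\alpha$ in the right place. If one instead wanted a slightly sharper constant, one could note that $s$ itself lies in $N(s,r)$ and sometimes $s'=s$, but this does not improve the worst case, so I would keep the clean $c+2\alpha r$ bound.
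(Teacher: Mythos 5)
Your proof is correct and follows essentially the same route as the paper's: the identical three-hop chain $d(p,S')\le d(p,y)+d(y,s)+d(s,s')\le \alpha r + c + \alpha r$, with the $\epsilon$-proper condition $r\le\frac{\epsilon}{\alpha}r^*_{FkC}(P)$ converting $2\alpha r$ into $2\epsilon r^*_{FkC}(P)$. Your write-up is in fact slightly more careful than the paper's in spelling out which bullet of the neighborhood definition justifies each hop.
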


   We will now show that  if we can solve $FkCP$ on  (a variant of) an  $\epsilon$-proper $(r,\alpha)$-net $Y$ of $P,$ something which will be very small,  we can easily  get a good approximate solution for $FkCP$ on the original data set $P$. 
   

    \begin{lemma}\label{lem:approximation_in_coreset}
    
    Let $A$ be a $\rho$-approximation algorithm for $FkCP$ and 
    $T_A(n)$ its running time on an input of size $n.$ Then, given an  
    $\epsilon$-proper $(r,\alpha)$-net $Y$  
    for $P,$ we can create a 
    $\rho(1+3\epsilon)r^*_{FkC}(P))$-approximation 
    algorithm for solving $FkCP(P)$
   in time $T_A(m|Y|)+O(m|Y|).$
    \end{lemma}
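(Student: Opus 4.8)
The plan is to reduce $FkCP$ on $P$ to $FkCP$ on a small, group‑aware instance built from $Y$, run $A$ there, and show the returned center set is \emph{already} a good solution for $P$. Concretely, using the partition, $\COL$ and $\Pot$ data associated with the $\epsilon$‑proper net $Y$, form
$$Z \;=\; \bigcup_{y\in Y}\Pot(y)\;\subseteq\; P,$$
where each representative $\Pot(y,i)$ keeps its true group label $i$ and the capacities $k_1,\dots,k_m$ are unchanged. Since $|\Pot(y)|\le m$ we have $|Z|\le m|Y|$, and $Z$ together with its labels is assembled from $Y,\COL,\Pot$ in $O(m|Y|)$ time. Run $A$ on $Z$ to obtain a feasible $S_Z\subseteq Z$ with $\max_{z\in Z}d(z,S_Z)\le\rho\, r^*_{FkC}(Z)$, at cost $T_A(|Z|)\le T_A(m|Y|)$. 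The algorithm simply outputs $S_Z$, so the running time is $T_A(m|Y|)+O(m|Y|)$ as claimed, and it remains to prove the approximation guarantee.

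First, $S_Z$ is feasible for $FkCP$ on $P$: it lies in $Z\subseteq P$, and the label of each $\Pot(y,i)$ in $Z$ equals its true group $g(\Pot(y,i))=i$, so the capacity constraints carry over verbatim. Second, I bound $r^*_{FkC}(Z)$ from above. Take an optimal $S^*\subseteq P$ for $P$, of cost $r^*_{FkC}(P)$, and map each $s\in S^*$ of group $i$ to $\Pot(y_s,i)$, where $N(y_s,r)$ is the unique cell containing $s$; this is well defined because $s\in N(y_s,r)$ forces $\COL_i(y_s)=1$. The image set is feasible for $Z$ (merging two centers of the same group in the same cell only lowers group counts), and for any $z\in Z$, choosing $s\in S^*$ with $d(z,s)\le r^*_{FkC}(P)$ and routing through $y_s$ gives, by the triangle inequality and the covering property of the net, distance at most $r^*_{FkC}(P)+2\alpha r$ to the image of $s$. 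Hence $r^*_{FkC}(Z)\le r^*_{FkC}(P)+2\alpha r$.

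Finally I bound the $P$‑cost of $S_Z$. Every $p\in P$ lies in some cell $N(y,r)$, and crucially $y=\Pot(y,g(y))\in Z$, so $d(p,S_Z)\le d(p,y)+d(y,S_Z)\le \alpha r+\rho\,r^*_{FkC}(Z)\le \rho\,r^*_{FkC}(P)+(1+2\rho)\,\alpha r$. Since $Y$ is $\epsilon$‑proper, $\alpha r\le \epsilon\, r^*_{FkC}(P)$, and since $\rho\ge 1$ we have $1+2\rho\le 3\rho$; therefore $\max_{p\in P}d(p,S_Z)\le \rho(1+3\epsilon)\,r^*_{FkC}(P)$. (Alternatively one could map $S_Z$ back to a set of net points, verify it is a candidate feasible solution in the sense of \cref{def:candidate}, and apply \cref{lem:CandHelper}; the direct argument above is shorter and slightly tighter.) The one step requiring real care is the middle one: $Z$ must be \emph{rich enough} that an optimal $P$‑solution has a feasible image in $Z$ of essentially the same cost — this is exactly why $\Pot$ stores a genuine representative of every group appearing in each cell and why each net point reappears in $Z$ as $\Pot(y,g(y))$ — and the de‑duplication that occurs when two optimal centers share a cell must be checked not to break feasibility (it does not) nor to raise the covering cost (it cannot, since we always route through a surviving image).
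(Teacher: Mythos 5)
Your proof is correct, and it takes a slightly different route from the paper's. The paper builds an auxiliary instance $Y'$ of \emph{co-located virtual copies} of each net point (one copy per color with $\COL_i(y)=1$), runs $A$ on $Y'$, and then must translate the answer back into $P$ by replacing each chosen copy with the stored representative $\Pot(y,i)$, invoking \cref{def:candidate} and \cref{lem:CandHelper} to absorb the extra $2\epsilon\, r^*_{FkC}(P)$ incurred by that replacement. You instead run $A$ directly on the set $Z=\bigcup_{y\in Y}\Pot(y)$ of \emph{actual} representatives, so the output is already a feasible subset of $P$ and no map-back step or candidate-solution machinery is needed; the price is that your bound on $r^*_{FkC}(Z)$ carries a $2\alpha r$ displacement (each optimal center and its proxy both sit in the same cell) rather than the paper's $\alpha r$, and the final cost works out to $\rho+(1+2\rho)\epsilon$ versus the paper's $\rho+(\rho+2)\epsilon$ times $r^*_{FkC}(P)$. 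Both are at most $\rho(1+3\epsilon)$ for $\rho\geqslant 1$, but note that for $\rho>1$ your constant is in fact marginally \emph{worse}, not ``slightly tighter'' as your parenthetical claims --- a cosmetic slip that does not affect the stated result. All the delicate points (that $y=\Pot(y,g(y))\in Z$, that $\COL_{g(s)}(y_s)=1$ because $\mathcal{T}$ partitions $P$, and that de-duplication of same-group optimal centers within a cell preserves feasibility) are handled correctly, and the running-time accounting matches the lemma.
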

    
    \begin{proof}
        Create a  new set  $Y'$  as follows.  For each point $y \in Y$  and each color $i$ such that $\COL_i(y)=1$ add a new point $y'$ to $y.$ $y'$ will be at the same location as $y$ and be in group $i.$  We say that $y$ is {\em associated} with $y'$.
        Note that   $ |Y'| = O(m|Y|)$. 

        
        Let $O=\{o_1,...,o_k\}$ denote the optimal solution of $FkCP(P).$ By the definition of $Y,$
       $o_t \in N(y(o_t),r)$ for some $y(o_t) \in Y.$

       By the definition of $Y'$ there exists $y'(o_t) \in Y'$ (located at $y(o_t)$) such that 
        $g(y'(o_t)) =g(o_t)$ and  $d(y'(o_t),o_t)\leqslant \alpha r$.
        
         Now feed $Y'$ as input to the $\rho$-approximate  $FkCP$ algorithm. Call this algorithm $A.$ Let $A(Y')$ denote the value of the solution computed by algorithm $A$ for input $Y'$.  $A(Y')\leqslant \rho r^*_{FkC}(Y')$.       
     Because $O$ is feasible,  $O'=\{y'(o_1),y'(o_2),...,y'(o_k)\} \subset Y'$ is feasible and $\forall y'\in Y'$ $d(y',O')\leqslant d(y',O)+\alpha r\leqslant r^*_{FkC}(P) +\alpha r$.
        Thus, 
        $$\begin{aligned}
            A(Y')&\leqslant \rho r^*_{FkC}(Y') \leqslant \rho \max_{y'\in Y'}d(y',S') \\
            &\leqslant \rho (r^*_{FkC}(P) +\alpha r)\leqslant \rho (1+\epsilon)r^*_{FkC}(P).
        \end{aligned}$$
        Finally, let $\bar S$ be the actual feasible solution generated by algorithm $A$ run on $Y'$ and $S \subset Y$ the set of points associated with the points in $\bar S.$ For each $y \in S$, arbitrarily choose one point $y'$ from $\bar S$ associated with $y$ and add $\Pot(x,g(y'))$ to $S'$.  Since $\bar S$ is feasible (in $Y'$), $S'$ is feasible (in $P$).
        This $S'$ witnesses that  $S$ is a candidate feasible solution of $Y$.
Furthermore, since
        $$A(Y') = \max_{y' \in Y'} d(y',S')= \max_{y \in Y} d(y,S),$$
the cost of $S$ for $y$ is $\le \rho (1+\epsilon)r^*_{FkC}(P).$  

    {Plugging this $S,S'$ into \cref{lem:CandHelper}  completes the construction. Note that all of the work performed other than calling $A(Y')$ can be implemented in $O(m |Y|)$ time.}
    \end{proof}

    {Combining the last lemma with  the $O(kn)$-time $3$-approximation JNN algorithm from  \cite{jones2020fair} will yield good approximate solutions for  $FkCP(P)$  given an $\epsilon$-proper $(r,\alpha)$-net of $P$. It remains to construct such nets.}
    
   When $r$ is fixed, it is easy to construct an $(r,\alpha)$-net $Y$ from scratch. There are many scenarios, though, where it is more desirable  to build the nets by {\em merging} previously built ones. This occurs in both the MapReduce and streaming models.
   
   The following algorithm/lemma  will be  a useful tool when constructing new nets from old ones.

    
    
    \begin{algorithm}[h!]
	\caption{Construct $\epsilon$-proper $(R, 2\alpha)$-net $Y'$ of $P_1 \cup P_2$}
	\label{algo:coreset}
    {\bf Input:} An $\epsilon$-proper $(r,2\alpha)$-net $Y_1$ of  $P_1$ and an  $\epsilon$-proper $(R,2\alpha)$-net $Y_2$ of  $P_2$. 
	\begin{algorithmic}[1]
	\STATE Set $Y'=Y_2$
    \FOR{each $y\in Y_1$}
    \IF{there exists $y'\in Y'$ such that $d(y,y')\leqslant \alpha R$}
    \FOR{$1\leqslant i\leqslant m$}
    \IF{$\COL_{i}(y')=0$ and $\COL_{i}(y)=1$}
    \STATE $\COL_i(y')=1$ and  $\Pot(y',i)=\Pot(y,i)$
    \ENDIF
    \ENDFOR
    \ELSE
    \STATE $Y'=Y'\cup \{y\}$
    \ENDIF
    \ENDFOR
	\end{algorithmic}
    \end{algorithm}
    
    \begin{lemma}\label{lem:update_r}
         Let $Y_1$ be an  $\epsilon$-proper $(r,2\alpha)$-net of  $P_1$ and $Y_2$ an  $\epsilon$-proper $(R,2\alpha)$-net of  $P_2$.
         If $2r\leqslant R\leqslant \frac{\epsilon}{2\alpha}r^*_{FkC}(P)$ and $\alpha\geqslant 1$, $Y'$ constructed by \cref{algo:coreset} is an $\epsilon$-proper $(R, 2\alpha)$-net of $P_1\cup P_2$ whose $\COL$ and $\Pot$ vectors 
         are accurately updated.
    \end{lemma}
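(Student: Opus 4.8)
The plan is to verify, one by one, every requirement in \cref{def:proper} and \cref{def:coreset_of_faircenter} for the output $Y'$ of \cref{algo:coreset} to be an $\epsilon$-proper $(R,2\alpha)$-net of $P_1\cup P_2$: (a) $Y'\subseteq P_1\cup P_2$; (b) the packing property at scale $R$; (c) the covering property with dilation $2\alpha$; (d) the radius bound $R\le\frac{\epsilon}{2\alpha}r^*_{FkC}(P)$; and (e) the existence of an associated partition $\mathcal T'$ of $P_1\cup P_2$ whose cells have radius $\le 2\alpha R$, together with $\COL$ and $\Pot$ data that is correct with respect to $\mathcal T'$. As is implicit in the streaming and MapReduce settings where this merge is used, I will take the two batches to be disjoint, $P_1\cap P_2=\emptyset$. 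Part (a) is immediate since $Y'$ is only ever populated with points of $Y_2\subseteq P_2$ and points of $Y_1\subseteq P_1$, and part (d) is exactly the right-hand inequality in the hypothesis.

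For (b) I would argue by induction over the insertions: $Y'$ starts as $Y_2$, whose points are pairwise at distance $\ge R$, and a point $y\in Y_1$ is inserted only when every current $y'\in Y'$ has $d(y,y')>\alpha R\ge R$ (using $\alpha\ge 1$); since points are never moved, the final $Y'$ has packing radius $\ge R$. For (c), a point $x\in P_2$ is covered by some $y_2\in Y_2\subseteq Y'$ with $d(x,y_2)\le 2\alpha R$ by the covering property of $Y_2$; a point $x\in P_1$ has some $y_1\in Y_1$ with $d(x,y_1)\le 2\alpha r$, and if $y_1$ survived into $Y'$ we are done since $2\alpha r\le 2\alpha R$, while otherwise there is $y'\in Y'$ with $d(y_1,y')\le\alpha R$, so $d(x,y')\le 2\alpha r+\alpha R\le 2\alpha R$ using $2r\le R$.

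The heart of the proof is (e). I would define $\mathcal T'$ by shadowing \cref{algo:coreset}: initialize it to the partition $\{N(y,R):y\in Y_2\}$ of $P_2$; then, going through the points $y\in Y_1$ in the order the algorithm does, if $y$ is absorbed into some $y'\in Y'$ replace the current cell of $y'$ by its union with $N(y,r)$, and if instead $y$ is inserted into $Y'$ open the new cell $N(y,r)$ for it. Because $\{N(y,r):y\in Y_1\}$ and $\{N(y,R):y\in Y_2\}$ partition the disjoint sets $P_1$ and $P_2$ and every $Y_1$-neighborhood is routed into exactly one cell, $\mathcal T'$ is a partition of $P_1\cup P_2$; each cell contains its center (by $\epsilon$-properness of $Y_2$ for the original cells, and trivially for a cell $N(y,r)$ promoted from $Y_1$). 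For the radius bound, a point $p$ in the cell of $y'$ either lies in $y'$'s original $Y_2$-neighborhood, whence $d(p,y')\le 2\alpha R$, or lies in some $N(y,r)$ absorbed into $y'$, whence $d(p,y')\le d(p,y)+d(y,y')\le 2\alpha r+\alpha R\le 2\alpha R$ by $\epsilon$-properness of $Y_1$ and $2r\le R$ (the case where $y'$ itself is the promoted point is the subcase $d(y,y')=0$).

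Finally, for the correctness of $\COL$ and $\Pot$ with respect to $\mathcal T'$, I would note that each coordinate $\COL_i(y')$ only ever moves from $0$ to $1$, and such a move happens exactly when some $Y_1$-point $y$ with $\COL_i(y)=1$ is absorbed into $y'$ while $\COL_i(y')$ is still $0$; at that moment $\Pot(y',i)$ is set to $\Pot(y,i)$, which by $\epsilon$-properness of $Y_1$ is a group-$i$ point of $N(y,r)$, and $N(y,r)$ is contained in the cell of $y'$. Conversely, if group $i$ occurs in the cell of $y'$ it occurs either in $y'$'s original neighborhood, so $\COL_i(y')$ was already $1$ when $y'$ entered $Y'$, or in some absorbed $N(y,r)$, so $\COL_i(y)=1$ and the flip of $\COL_i(y')$ occurs no later than when $y$ is processed; hence $\COL_i(y')=1$ at the end, and, since $\COL_i(y')$ is never reset and $\Pot(y',i)$ is reassigned only at such a flip, $\Pot(y',i)$ is a valid group-$i$ representative inside the cell of $y'$. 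Also $\Pot(y',g(y'))=y'$ persists because $\COL_{g(y')}(y')$ is already $1$ when $y'$ enters $Y'$, so coordinate $g(y')$ is never flipped. Together with (a)--(d) this proves the lemma. I expect the only real difficulty to be the bookkeeping in (e): pinning down $\mathcal T'$ so that it is consistent with the arbitrary absorption choices of \cref{algo:coreset}, and keeping the case of a surviving center that originates in $Y_1$ separate from one that originates in $Y_2$; the metric estimates in (b)--(c) and the radius bound are routine once $2r\le R$ and $\alpha\ge 1$ are in hand.
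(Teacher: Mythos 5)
Your proof is correct and follows essentially the same route as the paper's: the packing and covering arguments are identical, and your part (e) simply spells out in full the bookkeeping that the paper compresses into one sentence (namely that an absorbed neighborhood $N(y,r)$ lies within $2\alpha R$ of the absorbing center, so merging its $\COL$/$\Pot$ data into that center is valid). The added explicitness about the merged partition $\mathcal T'$ and the disjointness assumption $P_1\cap P_2=\emptyset$ is a welcome clarification but not a different argument.
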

    \begin{proof}
    Let $Y_1'=Y' \cap Y_1$ be the points from $Y_1$ added to $Y'.$ Now let $y,y' \in Y'.$ 
    If $y,y' \in Y_2$ then $d(y,y') \ge R.$  
    If $y\in Y_2$  and $y'\in Y_1'$ then by construction, $d(y,y')> \alpha R > R.$  If both $y,y' \in Y_1'$  assume that $y$ was added to $Y'$ before $y'$. Then, again, by construction, $d(y,y')> \alpha R > R.$ 
    So, in all cases, the packing condition $d(y,y') \ge R$ holds.

    To validate the covering  condition, first assume that $y \in P_2.$
    Then, because $Y_2$ is an  $\epsilon$-proper $(R,2\alpha)$-net of  $P_2$, there exists $y' \in Y_2 \subseteq Y'$ 
    such that $d(y,y') \le 2 \alpha R.$ 
    
    Next assume that $y \in P_1.$ Because $Y_1$ is an  $\epsilon$-proper $(r,2\alpha)$-net of  $P_1$, there exists $y' \in Y_1$ 
    such that $d(y,y') \le 2\alpha r.$  If $ y' \in Y_1'$ then, since $2\alpha r \le 2 \alpha R$, the covering  condition trivially  holds.
    If $ y' \not\in Y_1'$, then there exists $\bar y \in Y'$ such that $d(\bar y,y') \le \alpha R.$ But then,
    $$d(y,\bar y) \le d(y,y') + d(y',\bar y)\le 2 \alpha r + \alpha R \le 2 \alpha R.$$
    Thus the covering condition always holds and $Y'$ is an $(R, 2\alpha)$-net of $P_1\cup P_2$. It is proper  because $R\leqslant \frac{\epsilon}{2\alpha}r^*$.

        
        That the $\COL$ and $\Pot$ vectors 
        are accurately updated for $Y'$ follows directly 
        from the definitions and the fact that, if $y \in Y_1$ is not added to 
        $Y'$ because $d(y,y') \le \alpha R$ for some $y' \in Y',$ then all points from $P_1$ in $N(y,r)$ are within distance $2\alpha R$ of $y'.$
        
        
        
    \end{proof}

    
    By \cref{algo:coreset} and \cref{lem:update_r}, when $r$ is updated we can efficiently construct a new $\epsilon$-proper $(r, \alpha)$-net of $P$ from $Y_1,Y_2$ in  time $O(m|Y_1|\cdot|Y_1\cup Y_2|)$. 
    In the next sections, we describe how to use these tools to  construct an $\epsilon$-proper $(r, \alpha)$-net of $P$ in streaming and MapReduce settings.

    \section{The  MapReduce Setting}
    In the MapReduce model of computation, the set $P$ of points to be clustered is distributed equally among $\ell$ {\em processors}. Each processor is allowed restricted access to the metric $d$: it may only compute the distance between only its own points. Each processor performs some computation on its set of points and sends a summary of small size to a {\em coordinator}. From the summaries, the coordinator then computes a globally feasible set $S$ of points which covers all the $n$ points in $P$ within a small radius. Let $P_t$ denote the set of points distributed to processor $t$.
    
    \subsection{Robust Setting}
    Firstly, given any constant $\epsilon>0$, we present a $3(1+\epsilon)$-approximation algorithm in the MapReduce setting. In this subsection, robustly set a target ratio $3(1+\epsilon)$ in advance and define $\Bar{\epsilon}=\epsilon/3$. The algorithm constructs a coreset with size $O(k\ell (8/\Bar{\epsilon})^D)$ where $D$ is the doubling dimension of the metric space and $\ell$ is the number of processors in the MapReduce setting.
    
	\begin{algorithm}[h!]
	\caption{Computation by the $t$'th Processor} 
	\label{ALg:Map}
	{\bf Input:}  Set $P_i$, metric $d$ restricted to $P_i$, group assignment function $g$ restricted to $P_t$
	\begin{algorithmic}[1]
    \STATE Arbitrarily select a point $p_1^t$ from $P_t$ and set $S_t=Y_t=\{p_1^t\}$
    \FOR{$j=2$ to $k$}
    \STATE Compute $p_j^i \leftarrow \arg\max_{p\in P_t} d(p,S_t)$;
    \STATE Set $S_t=S_t\cup \{p_j^t\}$
    \ENDFOR
    \STATE Compute $r_t=\frac{1}{8}\max_{p\in P_t}d(p,S_t)$
    \STATE Set $\COL_{g(p_1^t)}(p_1^t)=1$ and $\Pot(p_1^t,g(p_1^t))=p_1^t$
    \STATE Set  $\COL_i(p_1^t)=0$ $(\forall i\neq g(p_1^t))$
    \FOR{each $p\in P_t$}
    \IF{there exists $y\in Y_t$ such that $d(p,y)\leqslant 2\Bar{\epsilon} r_t$}
    \IF{$\COL_{g(p)}(y)=0$}
    \STATE $\COL_{g(p)}(y)=1$ and $\Pot(y,g(p))=p$
    \ENDIF
    \ELSE
    \STATE $Y_t=Y_t\cup \{p\};$ 
    \STATE Set $\COL_{g(p)}(p)=1$ and $\Pot(p,g(p))=p$
    \STATE Set $\COL_i(p)=0$ $(\forall i\neq g(p))$
    \ENDIF
    \ENDFOR
    \STATE Send $(Y_t,r_t)$ to the coordinator, where each $y\in Y_t$ associates with a vector $\COL(y)$ and a set $\Pot(y)$.
	\end{algorithmic}
    \end{algorithm}	
    
    \begin{lemma} \label{lem:coreset_of_MapReduce}
        \Cref{ALg:Map} computes an $\Bar{\epsilon}$-proper $(\Bar{\epsilon} r_t,2)$-net $Y_i$ of given $P_t$, where $|Y_t|=O(k(8/\Bar{\epsilon})^D)$. (see proof in appendix)
    \end{lemma}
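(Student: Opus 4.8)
The plan is to verify in turn the three assertions of the statement: that $Y_t$, together with the neighbourhoods and the $\COL$, $\Pot$ data built by \cref{ALg:Map}, is an $(\bar\epsilon r_t,2)$-net of $P_t$ equipped with an associated partition as in \cref{def:proper} and with $\COL$, $\Pot$ as in \cref{def:coreset_of_faircenter}; that it is $\bar\epsilon$-proper; and that $|Y_t|=O(k(8/\bar\epsilon)^D)$. For the net structure I would read everything off the second \textbf{for} loop. A point $p\in P_t$ is inserted into $Y_t$ only when $d(p,y)>2\bar\epsilon r_t$ for every $y$ then in $Y_t$, and $Y_t$ only grows, so any two members of $Y_t$ are at distance $>2\bar\epsilon r_t\ge\bar\epsilon r_t$, which is the packing property at radius $\bar\epsilon r_t$. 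For covering, every $p\in P_t$ is processed and is either placed into $Y_t$ or matched to some $y\in Y_t$ with $d(p,y)\le2\bar\epsilon r_t$; letting $N(y,\bar\epsilon r_t)$ consist of $y$ together with all points matched to $y$ makes $\{N(y,\bar\epsilon r_t):y\in Y_t\}$ a partition of $P_t$ with $y\in N(y,\bar\epsilon r_t)$ and $d(p,y)\le 2\cdot(\bar\epsilon r_t)$ whenever $p\in N(y,\bar\epsilon r_t)$, exactly as \cref{def:proper} requires, and the $\COL$ and $\Pot$ assignments made on initialization and at each match are precisely the ones \cref{def:coreset_of_faircenter} prescribes for this partition.

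For $\bar\epsilon$-properness it remains to bound $r_t$. The first \textbf{for} loop is Gonzalez's farthest-point heuristic for $k$ centers run on $P_t$, so $8r_t=\max_{p\in P_t}d(p,S_t)\le 2\,r^*_{kC}(P_t)$ by its standard analysis: the $k$ chosen centers together with a farthest remaining point form $k+1$ points pairwise at distance $\ge\max_{p\in P_t}d(p,S_t)$, and pigeonhole puts two of them into a common optimal cluster. Combining this with $r^*_{kC}(P_t)\le r^*_{FkC}(P_t)$ from \cref{sec:notation} gives $\bar\epsilon r_t\le\tfrac{\bar\epsilon}{4}r^*_{FkC}(P_t)\le\tfrac{\bar\epsilon}{2}r^*_{FkC}(P_t)$, which is properness. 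The degenerate case $r_t=0$, which occurs exactly when $P_t$ has at most $k$ distinct points, is trivial: $Y_t$ is then the set of distinct points of $P_t$, has size $\le k$, and is a proper $(0,2)$-net vacuously.

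The remaining work, and the place where the doubling dimension enters, is the size bound. Since $Y_t\subseteq P_t$, every $y\in Y_t$ has $d(y,S_t)\le8r_t$, so I would write $Y_t=\bigcup_{s\in S_t}Y_t^s$ where $Y_t^s$ collects the points of $Y_t$ whose nearest point of $S_t$ is $s$; this is a union of at most $k$ parts. Two points of $Y_t^s$ lie within $8r_t$ of the common $s$, hence within $16r_t$ of each other, while by the packing property they are more than $2\bar\epsilon r_t$ apart, so the metric induced on $Y_t^s$ has aspect ratio at most $8/\bar\epsilon$. Assuming $\bar\epsilon\le4$, so that this ratio is $\ge2$, \cref{lem:doubling} gives $|Y_t^s|\le(8/\bar\epsilon)^{O(\dim(\mathcal X))}$, and summing over the $\le k$ parts yields $|Y_t|=O(k(8/\bar\epsilon)^{O(D)})$, matching the claimed bound up to the constant in the exponent. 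I expect the only delicate point to be pinning down the two radii in the aspect ratio — the $16r_t$ comes from $\max_{p\in P_t}d(p,S_t)=8r_t$ and the triangle inequality, not from the net radius $\bar\epsilon r_t$ — and checking the hypothesis $\Delta\ge2$ of \cref{lem:doubling}; everything else is routine bookkeeping.
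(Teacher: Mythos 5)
Your proposal is correct and follows essentially the same route as the paper's proof: verify packing/covering from the insertion rule, bound $r_t\le\tfrac14 r^*_{FkC}(P_t)$ via Gonzalez's guarantee and $r^*_{kC}\le r^*_{FkC}$ to get properness, and bound $|Y_t|$ by covering $Y_t$ with the $k$ balls $B(p_j^t,8r_t)$ and applying \cref{lem:doubling} to the aspect ratio $16r_t/(2\bar\epsilon r_t)=8/\bar\epsilon$. Your explicit handling of the degenerate case $r_t=0$ and the check that $\Delta\ge 2$ are minor refinements the paper omits but do not change the argument.
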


     Since each point $y_t\in Y_t$ has an associated set $\Pot(y_t)$, by \cref{lem:coreset_of_MapReduce} processor $t$ sends $O(mk(8/\Bar{\epsilon})^D)$ points to the coordinator. After receiving information from all processors, the coordinator will use \cref{lem:update_r}
     to compute an $\Bar{\epsilon}$-proper 
     net $Y$ of the input set $P$ and solve $FkPC$ in this coreset. 
     {To use the lemma, we first need to  lower bound $r^*_{kC}(P)$.}

     \begin{lemma}\label{lem:property_of_greedy_algo}
         {$\forall Q\subset P$, let $S$ and $A(Q)$ respectively denote the solution set and the value returned by the $2$-approximation greedy $kPC$ algorithm \cite{gonzalez1985clustering} when running on $Q$ (recall that this is lines 2-5 of \cref{ALg:Map}). Then $A(Q)\leqslant 2\cdot r^*_{kPC}(P)$. } (see proof in appendix)
     \end{lemma}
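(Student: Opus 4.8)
The plan is to run the classical Gonzalez analysis on $Q$, but to charge the radius it returns against the optimum of the \emph{superset} $P$ instead of against $r^*_{kC}(Q)$. No property of the doubling dimension is needed.

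First I would recall the structure produced by lines 2--5 of \cref{ALg:Map} run on $Q$: it selects $p_1,\dots,p_k\in Q$ with $p_j=\arg\max_{p\in Q} d(p,S_{j-1})$ where $S_{j-1}=\{p_1,\dots,p_{j-1}\}$, and outputs $A(Q)=\max_{p\in Q} d(p,S_k)$ with $S=S_k$. Introduce a virtual $(k{+}1)$-st point $p_{k+1}=\arg\max_{p\in Q} d(p,S_k)$, so $d(p_{k+1},S_k)=A(Q)$; if $A(Q)=0$ the claim is immediate, so assume $A(Q)>0$, in which case $p_1,\dots,p_{k+1}$ are distinct. The standard monotonicity observation is that adding a center can only decrease every point's distance to the center set: $d(p,S_j)=\min\bigl(d(p,S_{j-1}),d(p,p_j)\bigr)\le d(p,S_{j-1})$ for all $p$, hence $d(p_{j+1},S_j)=\max_{p\in Q}d(p,S_j)\le \max_{p\in Q}d(p,S_{j-1})=d(p_j,S_{j-1})$. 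Therefore $d(p_2,S_1)\ge d(p_3,S_2)\ge\cdots\ge d(p_{k+1},S_k)=A(Q)$, and for any $i<j\le k+1$, since $p_i\in S_{j-1}$, we get $d(p_i,p_j)\ge d(p_j,S_{j-1})\ge A(Q)$. So the $k+1$ points $p_1,\dots,p_{k+1}$ are pairwise at distance at least $A(Q)$.

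Next I would bring in an optimal $k$-center solution $O$ of $P$: $|O|\le k$ and every point of $P$ is within $r^*_{kC}(P)$ of some center in $O$. Because $Q\subseteq P$, all $k+1$ points $p_1,\dots,p_{k+1}$ lie in $P$, hence each is covered by some center of $O$. By pigeonhole two of them, say $p_i$ and $p_j$, are covered by the same $o\in O$, so by the triangle inequality $d(p_i,p_j)\le d(p_i,o)+d(o,p_j)\le 2r^*_{kC}(P)$. Combined with the previous paragraph, $A(Q)\le d(p_i,p_j)\le 2r^*_{kC}(P)$, which is the statement.

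The one point worth stressing — and the reason this is stronger than the textbook fact that greedy $2$-approximates $k$-center on $Q$ itself — is that the bound is against $r^*_{kC}(P)$, not $r^*_{kC}(Q)$. This works precisely because the $k+1$ mutually far witnesses produced by greedy on $Q$ are themselves elements of $Q\subseteq P$, so $P$'s optimal solution must still cover all of them and the pigeonhole collision still occurs; there is no "main obstacle" beyond being careful with the monotonicity of the greedy distances and the degenerate cases ($|Q|<k$ or $A(Q)=0$, both trivial).
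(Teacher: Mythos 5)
Your proof is correct and is essentially the paper's argument: both identify the $k{+}1$ mutually far points (the $k$ greedy centers plus the farthest remaining point of $Q$), note they all lie in $P$, and apply pigeonhole against the $k$ centers of an optimal solution for $P$ plus the triangle inequality. The paper phrases it as a proof by contradiction while you argue directly (and handle the degenerate $A(Q)=0$ case explicitly), but the substance is identical.
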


	\begin{algorithm}[h!]
	\caption{Computation by the coordinator} 
	\label{ALg:Coordinator}
	{\bf Input:}  $\forall 1\leqslant t\leqslant \ell$, an $\Bar{\epsilon}$-proper $(\Bar{\epsilon} r_t,2)$ net $Y_t$ of $P_t$ and each $y\in Y_t$ has associated $\COL(y)$ and $\Pot(y)$
	\begin{algorithmic}[1]
    \STATE Set $Y=\emptyset$ and $M=\emptyset$
    \STATE Let $R=2\cdot\max_{1\leqslant t\leqslant \ell}r_t$
    \FOR{$1\leqslant t\leqslant \ell$}
    \STATE Apply \cref{algo:coreset} in $Y_t$ and $Y$ to construct a new $\Bar{\epsilon}$-proper $(\Bar{\epsilon} R,2)$ net $Y$ of $M\cup P_t$.
    \ENDFOR
	\end{algorithmic}
    \end{algorithm}

    \begin{lemma}\label{lem:coreset_of_coordinator}
        \cref{ALg:Coordinator} returns an $\Bar{\epsilon}$-proper $(\Bar{\epsilon} R,2)$ net $Y$ of $P$ in time $O(m\ell k^2 (8/\Bar{\epsilon})^{2D})$. (see proof in appendix)
    \end{lemma}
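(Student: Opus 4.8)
The plan is to argue by induction on the loop index $t$ in \cref{ALg:Coordinator}, maintaining the invariant that after the $t$-th iteration $Y$ is an $\bar\epsilon$-proper $(\bar\epsilon R,2)$-net of $M_t:=P_1\cup\cdots\cup P_t$ whose $\COL$ and $\Pot$ vectors are accurately maintained; taking $t=\ell$ then gives exactly the claimed net of $P$. The base case is vacuous, since before the loop $Y=\emptyset$ is trivially such a net of $M_0=\emptyset$. For the inductive step I would invoke \cref{lem:update_r} with $\alpha=1$, playing the role of ``$P_1$'' with $P_t$ and its net $Y_t$ (of radius $\bar\epsilon r_t$), and the role of ``$P_2$'' with $M_{t-1}$ and its net $Y$ (of radius $\bar\epsilon R$, valid by the induction hypothesis); the conclusion of that lemma is precisely that the set output by \cref{algo:coreset} is an $\bar\epsilon$-proper $(\bar\epsilon R,2)$-net of $P_t\cup M_{t-1}=M_t$ with correctly updated $\COL$/$\Pot$ data.

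The only substantive work is verifying the numerical hypotheses of \cref{lem:update_r} with ``$r$''$=\bar\epsilon r_t$ and target radius $\bar\epsilon R$. The condition $2\bar\epsilon r_t\le\bar\epsilon R$ is immediate from $R=2\max_s r_s\ge 2r_t$. The condition $\bar\epsilon R\le\frac{\bar\epsilon}{2}r^*_{FkC}(P)$, equivalently $R\le\frac12 r^*_{FkC}(P)$, is the crux: $R$ is assembled purely from the locally computed radii $r_t$, so these local quantities must be tied back to the \emph{global} optimum. This is exactly what \cref{lem:property_of_greedy_algo} provides: since $r_t=\frac18\max_{p\in P_t}d(p,S_t)=\frac18 A(P_t)\le\frac18\cdot 2\,r^*_{kC}(P)=\frac14 r^*_{kC}(P)$, we get $R=2\max_t r_t\le\frac12 r^*_{kC}(P)\le\frac12 r^*_{FkC}(P)$, using $r^*_{kC}(P)\le r^*_{FkC}(P)$. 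The same chain gives $\bar\epsilon r_t\le\frac{\bar\epsilon}{2}r^*_{FkC}(P)$, so each input net $Y_t$ is itself $\bar\epsilon$-proper, the last remaining hypothesis, and the $\COL$/$\Pot$ correctness is inherited from the corresponding clause of \cref{lem:update_r}.

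For the running time, the cost is dominated by the $\ell$ invocations of \cref{algo:coreset}. By \cref{lem:coreset_of_MapReduce} each $Y_t$ has size $O(k(8/\bar\epsilon)^D)$, and since \cref{algo:coreset} outputs a subset of the union of its two inputs, the running $Y$ always satisfies $Y\subseteq\bigcup_s Y_s$, hence $|Y|=O(\ell k(8/\bar\epsilon)^D)$ at every stage. Feeding $|Y_t|=O(k(8/\bar\epsilon)^D)$ and $|Y_t\cup Y|=O(\ell k(8/\bar\epsilon)^D)$ into the $O(m|Y_1|\cdot|Y_1\cup Y_2|)$ per-invocation bound noted after \cref{lem:update_r} and summing over the $\ell$ iterations is a routine calculation that yields the stated bound $O(m\ell k^2(8/\bar\epsilon)^{2D})$. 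I expect the main obstacle to be not the merging mechanics, which are absorbed wholesale into \cref{lem:update_r}, but the bookkeeping needed to certify the global properness inequality $R\le\frac12 r^*_{FkC}(P)$ — i.e.\ transferring the guarantees of the independent local greedy runs to the full point set via \cref{lem:property_of_greedy_algo}.
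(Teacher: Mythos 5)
Your proof follows essentially the same route as the paper's: bound each $r_t$ by $\tfrac14 r^*_{kC}(P)$ via \cref{lem:property_of_greedy_algo} so that $R\le\tfrac12 r^*_{FkC}(P)\,$, invoke \cref{lem:update_r} across the merging loop (you merely make the induction over iterations explicit, which the paper leaves implicit), and sum the per-merge costs $O(m|Y_t|\cdot|Y_t\cup Y|)$ for the running time. The one caveat is that your own intermediate sizes ($|Y_t|=O(k(8/\bar{\epsilon})^D)$ and $|Y_t\cup Y|=O(\ell k(8/\bar{\epsilon})^D)$) summed over $\ell$ iterations literally give $O(m\ell^2 k^2(8/\bar{\epsilon})^{2D})$ rather than the stated $O(m\ell k^2(8/\bar{\epsilon})^{2D})$ — but the paper's appendix performs exactly the same summation and asserts the same bound, so this looseness is inherited from the source rather than introduced by you.
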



    After each processor runs \cref{ALg:Map} and the coordinator runs 
    \cref{ALg:Coordinator} the coordinator then  
    uses \cref{lem:approximation_in_coreset} with  the $3$-approximation JNN   algorithm \cite{jones2020fair} for the fair $k$-center problem. When $\Bar{\epsilon}=\epsilon/3$, this immediately returns a 
    $3(1+\epsilon)$-approximate solution to the fair $k$-center problem on $P.$ 
    Recall that  the running time of the JNN algorithm is $O(|X|k)$ where $|X|$ is the number of points in the input set. The coordinator receives $O(\ell k(24/\epsilon)^D)$ points and the $Y$ outputted by \cref{ALg:Coordinator} is a subset of these. Hence, the use of \cref{lem:approximation_in_coreset} requires only  $O(m\ell k^2(24/\epsilon)^D)$ time.

    

    
    \subsection{A Practical Heuristic}
    \label{sec:heuristic of mapreduce}
    The size of the  coreset in our algorithm can be viewed as a parameter that affects both  the 
   {memory usage} and the approximation ratio.
    Until now, we focused on fixing the  worst-case approximation ratio and let that specify the memory required. 
    In practice, we can deal with this parameter more flexibly. 
    In real-world implementations, memory-space memory can be restricted.
    Inspired by a similar approach in \cite{ceccarello2019solving}, we thus slightly modify our algorithm and use permitted memory size itself as an input, instead of the approximation ratio.
    
  {Our new  algorithm (heuristic)  will start by restricting the size of the coreset to some given value $Q$
    (w.l.o.g., assume $Q>k$).
    We now describe the procedure and also show that this coreset becomes an $\epsilon$-coreset when $Q$ is large enough. }
    
 {This new algorithm is two phases but is even easier to implement.}
    During the first phase, after receiving the point set $P_t$, each processor $t$ uses the $2$-approximation greedy algorithm  from \cite{gonzalez1985clustering} to solve the $Q$-center problem on $P_t$. This generates a solution set $Y_t$ of $Q$ points. Each point $p\in P_t$ is then assigned to  its closest point $y^{t}\in Y_t$. All points that are assigned to the same center $y_j^t\in Y_t$ form a cluster $X_j^t$. By definition, $\mathcal{T}_t=\{X_j^t:y_j^t\in Y_t\}$ is a partition of $ P_t$. 
   {We then, as in   definition \ref{def:coreset_of_faircenter},  construct  vector $\COL(y)$ and set $\Pot(y)$ for each $y\in Y_t.$}


    Each processor $t$  then sends $Y_t$ along with the associated  vector $\COL(y)$ and sets $\Pot(y)$ for all $y\in Y_t$, to the coordinator. 
    { $Y_t$ is a solution of the $Q$-center problem, so $|Y_t|\le Q.$}

    The process concludes by having the coordinator directly run the JNN algorithm on $\bigcup_t \bigcup_{t \in Y_t} \Pot(y)$ to construct a feasible solution.



    }


    \begin{theorem}\label{thm:heuristic_mapreduce}
        When $Q$ is large enough, the heuristic is a  $3(1+\epsilon)$ approximation algorithm. (see proof in appendix)
    \end{theorem}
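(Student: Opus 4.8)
Write $r^{*}=r^{*}_{FkC}(P)$ and $\bar\epsilon=\epsilon/3$, and assume $|P|>k$ (otherwise $r^{*}=0$, the reported set already equals $P$, and the claim is vacuous). For processor $t$ let $r_{t}=\max_{p\in P_{t}}d(p,Y_{t})$ be the radius of the greedy $Q$-center clustering of the first phase; the assignment of each $p\in P_{t}$ to its closest $y(p)\in Y_{t}$ satisfies $d(p,y(p))\le r_{t}$, the partition $\mathcal T_{t}$ and the data $\COL,\Pot$ are exactly those of \cref{def:proper,def:coreset_of_faircenter} with $\alpha=1$, and in particular $y=\Pot(y,g(y))\in\Pot(y)$ for every $y\in Y_{t}$. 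The plan is the ``replace each optimum centre by a nearby stored representative'' argument underlying \cref{lem:CandHelper} and \cref{lem:approximation_in_coreset}; the one genuinely new point is that a large enough $Q$ drives every $r_{t}$ below $\bar\epsilon r^{*}$. Indeed, if $|P_{t}|\le Q$ then $r_{t}=0$, and otherwise the $Q$ greedy centers are pairwise at distance $\ge r_{t}$, hence form a $Q$-point subset of $P$ of aspect ratio at most $\mathrm{diam}(P)/r_{t}$; by \cref{lem:doubling} this forces $Q\le\max\{2,\mathrm{diam}(P)/r_{t}\}^{O(D)}$, which is impossible once $Q$ exceeds a threshold polynomial (exponent $O(D)$) in the aspect ratio of $P$ and in $1/\bar\epsilon$ — here $r^{*}\ge r^{*}_{kC}(P)\ge\tfrac12\delta_{\min}>0$ (with $\delta_{\min}$ the smallest positive pairwise distance in $P$, using $|P|>k$) keeps $\mathrm{diam}(P)/r^{*}$ finite. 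Fix such a $Q$; then $r_{t}\le\bar\epsilon r^{*}$ for every $t$.

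\emph{The union of the $\Pot$-sets carries a near-optimal feasible set.} Let $Y'=\bigcup_{t}\bigcup_{y\in Y_{t}}\Pot(y)\subseteq P$ be the set JNN is run on, and let $O=\{o_{1},\dots,o_{k}\}$ be an optimal fair solution for $P$. Each $o_{s}$ lies in some $P_{t(s)}$ and is assigned to $y_{s}=y(o_{s})\in Y_{t(s)}$ with $d(o_{s},y_{s})\le r_{t(s)}\le\bar\epsilon r^{*}$; since $o_{s}\in N(y_{s},r_{t(s)})$ we get $\COL_{g(o_{s})}(y_{s})=1$, so $q_{s}:=\Pot(y_{s},g(o_{s}))\in Y'$ is a point of group $g(o_{s})$ with $d(y_{s},q_{s})\le r_{t(s)}$, whence $d(o_{s},q_{s})\le 2\bar\epsilon r^{*}$. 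The set $\widehat O=\{q_{1},\dots,q_{k}\}\subseteq Y'$ has, for each group $i$, at most as many points as $O$, hence is feasible; and for every $p\in P$, choosing $o_{s}$ with $d(p,o_{s})\le r^{*}$ gives $d(p,\widehat O)\le d(p,o_{s})+d(o_{s},q_{s})\le(1+2\bar\epsilon)r^{*}$. In particular $\widehat O$ witnesses $r^{*}_{FkC}(Y')\le(1+2\bar\epsilon)r^{*}$.

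\emph{Lifting the JNN solution back to $P$.} JNN \cite{jones2020fair} is a $3$-approximation, so on input $Y'$ it returns a feasible $S^{*}\subseteq Y'$ with $\max_{p\in Y'}d(p,S^{*})\le 3\,r^{*}_{FkC}(Y')\le 3(1+2\bar\epsilon)r^{*}$, and $S^{*}$ is feasible in $P$ since the capacities are unchanged. For arbitrary $p\in P$ its greedy center $y(p)$ lies in $Y'$ (as $y(p)=\Pot(y(p),g(y(p)))$), so $d(y(p),S^{*})\le 3(1+2\bar\epsilon)r^{*}$ and
\[
d(p,S^{*})\le d(p,y(p))+d(y(p),S^{*})\le\bar\epsilon r^{*}+3(1+2\bar\epsilon)r^{*}=(3+7\bar\epsilon)r^{*}\le 3(1+\epsilon)r^{*},
\]
using $\bar\epsilon=\epsilon/3$ at the last step; this is the asserted ratio.

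\emph{Main obstacle.} The second and third paragraphs are a routine triangle-inequality chase once the stored representatives are known to be within $2\bar\epsilon r^{*}$ of the optimum centres. The delicate part is the first paragraph: giving ``$Q$ large enough'' a precise meaning and verifying via \cref{lem:doubling} that it forces $r_{t}\le\bar\epsilon r^{*}$ on every processor, together with the easy but necessary observation that $r^{*}$ is a fixed strictly positive quantity when $|P|>k$, so the required threshold on $Q$ — although instance dependent — is finite.
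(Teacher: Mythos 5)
Your proof is correct and follows essentially the same route as the paper's: drive the per-processor greedy radius $r_t$ below $\bar\epsilon\, r^*$ by taking $Q$ large, replace each optimal center by the stored same-group representative $\Pot(y(o_s),g(o_s))$ at distance $\le 2\bar\epsilon r^*$ to bound $r^*_{FkC}(Y')$, then run JNN and lift back with one more triangle inequality. Your first paragraph is in fact slightly more careful than the paper's (which only asserts monotonicity of $r_Q$ in $Q$), since you use \cref{lem:doubling} to make "large enough $Q$" quantitative; your final constant $(3+7\bar\epsilon)$ is also marginally tighter than the paper's $3(r^*+3r_Q)$ bound, but both yield $3(1+\epsilon)$.
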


    
    \section{The Dynamic/Streaming Setting}
    For $t\leqslant n=|P|$, let $P(t)$   denote the set of first $t$ points read and $r^*(t)$ the optimal value of  $FkCP$ on $P_t.$

    \subsection{Robust Setting}
    In order to use our techniques in the streaming setting we will need  a lower  bound on $r^*(t)$.
    Such a bound already exists.
     More specifically, \cite{charikar2004incremental} provide an incremental algorithm that maintains such a lower bound $r(t)$ of $r^*(t)$. Their algorithm 
     actually maintains a solution set $S(t)$,  $|S(t)|\leqslant k$ such that (1) $P(t) \subset\bigcup_{s\in S(t)}B(s,8r(t))$; (2) $\forall s_1,s_2\in S(t)$ $d(s_1,s_t)>4r(t)$; (3) $\forall t$ $r(t)\leqslant r^*(t)$; and (4) $r(t+1)=2^\lambda r(t)$ where $\lambda$ is a non-negative integer and computed by the incremental algorithm.
    
    
    When we use this incremental algorithm as a subroutine, robustly set a target ratio $3(1+\epsilon)$ and define $\Bar{\epsilon}=\epsilon/3$, we can incrementally maintain an $\Bar{\epsilon}$-proper $(r,2)$ net $Y$ of $P$.

    \begin{algorithm}[h!]
	\caption{Streaming algorithm for constructing an {$(\Bar{\epsilon} r,2)$} net $Y$ of $P$} 
	\label{ALg:Streaming}
	{\bf Input:} Ordered set $P=\{p_1,...,p_n\}$
	\begin{algorithmic}[1]
	\STATE $Y(0)=\emptyset$
    \STATE When $p_t$ is read
    \IF{$t\leqslant k$}
    \STATE $Y(t)=Y(t-1)\cup \{p_t\}$
    \ELSE
    \STATE Apply  \cite{charikar2004incremental}'s incremental algorithm to calculate lower bound $r(t)$ of $r^*(t)$.
    \IF{$r(t)>r(t-1)$}
    \STATE { Apply \cref{algo:coreset} with $Y_1=Y(t-1)$ and $Y_2=\emptyset$ to construct a new $(\frac{\Bar{\epsilon} r(t)}{2},2)$ net $Y(t-1)$ of first $t-1$ points.}
    \ENDIF
    \IF{there exists $y\in Y(t-1)$ such that $d(p_t,y)\leqslant \Bar{\epsilon} r(t)$}
    \IF{$\COL_{g(p_t)}(y)=0$}
    \STATE $\COL_{g(p_t)}(y)=1$ and $\Pot(y,g(p_t))=p(t)$
    \STATE $Y(t)=Y(t-1)$
    \ENDIF
    \ELSE
    \STATE $Y(t)=Y(t-1)\cup \{p_t\}$
    \STATE  $\COL_{g(p_t)}(p(t))=1$ and $\Pot(p_t,g(p_t))=p_t$
    \STATE $\COL_i(p_t)=0$ $(\forall i\neq g(p_t))$
    \ENDIF
    \ENDIF
	\end{algorithmic}
    \end{algorithm}	
    
    \begin{lemma} \label{lem:coreset_of_Streaming}
        \Cref{ALg:Streaming} computes an $\Bar{\epsilon}$-proper $(\frac{\Bar{\epsilon}}{2} r(t),2)$-net $Y_t$ of  $P_t$, where $|Y_t|=O(k(32/\Bar{\epsilon})^D)$. (see proof in appendix)
    \end{lemma}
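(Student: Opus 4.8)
The plan is to prove both claims --- that \cref{ALg:Streaming} maintains a correctly‑labelled $\bar\epsilon$‑proper $(\tfrac{\bar\epsilon}{2}r(t),2)$‑net $Y_t$ of $P_t$, and that $|Y_t|=O(k(32/\bar\epsilon)^D)$ --- by induction on $t$, leaning on the invariants (1)--(4) of \cite{charikar2004incremental} and on \cref{lem:update_r} for the re‑scaling step. The base case is $t\le k$ (more generally, any $t$ for which Charikar's radius is still $r(t)=0$): here $Y(t)=P(t)$, which is trivially a $\bar\epsilon$‑proper $(0,2)$‑net with singleton neighborhoods and with $\COL,\Pot$ initialized exactly as in \cref{def:coreset_of_faircenter}, and $|Y(t)|\le k$.

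For the inductive step take $t>k$ and assume $Y(t-1)$ is a $\bar\epsilon$‑proper $(\tfrac{\bar\epsilon}{2}r(t-1),2)$‑net of $P(t-1)$ with accurate $\COL,\Pot$. If $r(t)=r(t-1)$ no re‑scaling is needed. If $r(t)>r(t-1)$, invariant (4) gives $r(t)=2^{\lambda}r(t-1)$ with $\lambda\ge1$, hence $2\cdot\tfrac{\bar\epsilon}{2}r(t-1)\le\tfrac{\bar\epsilon}{2}r(t)$; together with $r(t)\le r^*_{FkC}(P)$ (see below) this is exactly the hypothesis of \cref{lem:update_r} applied with $Y_1=Y(t-1)$, $Y_2=\emptyset$, $\alpha=1$, so the rebuild step of \cref{ALg:Streaming} returns a $\bar\epsilon$‑proper $(\tfrac{\bar\epsilon}{2}r(t),2)$‑net of $P(t-1)$ with accurate $\COL,\Pot$. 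It then remains to fold in $p_t$: if $d(p_t,y)\le\bar\epsilon r(t)=2\cdot\tfrac{\bar\epsilon}{2}r(t)$ for some net point $y$, then $p_t$ may legitimately join $N(y,\cdot)$ (the required bound $d(p_t,y)\le\alpha r'$ holds with $\alpha=2$, $r'=\tfrac{\bar\epsilon}{2}r(t)$) and $\COL_{g(p_t)}(y),\Pot(y,g(p_t))$ are updated precisely as \cref{def:coreset_of_faircenter} prescribes; otherwise $p_t$ becomes a new net point at distance $>\bar\epsilon r(t)>\tfrac{\bar\epsilon}{2}r(t)$ from every other net point, so the packing property is preserved. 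In either case $Y(t)$ covers $P(t)$ within $\bar\epsilon r(t)$ and $\mathcal T=\{N(y,\cdot):y\in Y(t)\}$ remains a partition of $P(t)$, so $Y(t)$ is the claimed $\bar\epsilon$‑proper net.

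It remains to justify the one external fact, namely that $Y(t)$ is \emph{proper}: with $\alpha=2$ this means $\tfrac{\bar\epsilon}{2}r(t)\le\tfrac{\bar\epsilon}{2}r^*_{FkC}(P)$, i.e.\ $r(t)\le r^*_{FkC}(P)$ for the \emph{full} input $P$. This is the streaming analogue of \cref{lem:property_of_greedy_algo}: the lower bound $r(t)$ of \cite{charikar2004incremental} is witnessed by mutually well‑separated points all lying in $P(t)\subseteq P$, and one checks that the argument behind invariant (3) in fact bounds the $k$‑center cost of \emph{any} superset of those points, so $r(t)\le r^*_{kC}(P)\le r^*_{FkC}(P)$. (Invariant (3) by itself only gives $r(t)\le r^*_{FkC}(P(t))$, which is not enough, since $r^*_{FkC}$ need not be monotone under point insertion.) For the size bound, invariant (1) supplies a set $S(t)$ with $|S(t)|\le k$ and $P(t)\subseteq\bigcup_{s\in S(t)}B(s,8r(t))$; since $Y(t)\subseteq P(t)$, the at most $k$ sets $Y(t)\cap B(s,8r(t))$ cover $Y(t)$. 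Within any such piece two points are within $16r(t)$ of each other (triangle inequality through $s$) and, by the packing property above, at distance $\ge\tfrac{\bar\epsilon}{2}r(t)$, so the piece has aspect ratio at most $32/\bar\epsilon$, which is at least $2$ (w.l.o.g.\ $\bar\epsilon\le1$); by \cref{lem:doubling} it contains at most $(32/\bar\epsilon)^{O(D)}$ points, and summing over the $\le k$ pieces gives $|Y(t)|=O(k(32/\bar\epsilon)^D)$. (When $r(t)=0$ one instead uses $|Y(t)|\le k$ directly.)

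I expect the main obstacle to be the bookkeeping in the inductive step, not the size bound, which is the standard ``cover $Y(t)$ by $k$ balls and invoke the doubling dimension'' argument. Concretely, one must verify that the two scale hypotheses of \cref{lem:update_r} --- $2r(t-1)\le r(t)$, guaranteed by Charikar's power‑of‑two radius updates, and $r(t)\le r^*_{FkC}(P)$, guaranteed by the certificate argument above --- hold exactly when the rebuild fires, and that both the $\COL/\Pot$ data and the neighborhood partition $\mathcal T$ survive the rebuild and the subsequent insertion of $p_t$.
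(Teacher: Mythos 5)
Your proof is correct and follows essentially the same route as the paper's: induction on $t$, invoking \cref{lem:update_r} when the radius doubles, Charikar's invariants for properness, and the standard ``cover $Y(t)$ by $k$ balls of radius $8r(t)$ and bound the aspect ratio by $32/\bar\epsilon$'' argument for the size bound. Your treatment of the properness condition is in fact slightly more careful than the paper's (which silently passes from $r(t)\le r^*(t)$ to $r(t)\le r^*$): the observation that the lower bound is certified by $k+1$ pairwise-separated points, so it bounds $r^*_{kC}$ of any superset, is exactly the right fix and mirrors \cref{lem:property_of_greedy_algo}.
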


    Finally, similar to the previous section, 
    conclude by using \cref{lem:approximation_in_coreset} with $\bar \epsilon = \epsilon/3$ and  calling the $3$-approximation JNN algorithm \cite{jones2020fair} for the $k$-center problem. This returns a  $3(1+\epsilon)$ approximation solution in at most $O(m k^2(96/\Bar{\epsilon})^D)$ time. 



    \subsection{A Practical Heuristic}
    As in \cref{sec:heuristic of mapreduce}, we slightly modify our algorithm and use memory  space instead of the target  approximation-ratio as an input parameter. 

    Again as in \cref{sec:heuristic of mapreduce}, 
    our new  algorithm (heuristic)  will start by restricting the size of the coreset to some given value $Q$ (w.l.o.g., assume $Q>k$). 
    
    We next directly apply the incremental algorithm \cite{charikar2004incremental} to solve the $Q$-center problem on the data stream. Different from \cite{charikar2004incremental}, each center $x$ will now have  an associated $\COL(x)$ function and a set $\Pot(x)$. 
    {At each step the algorithm also updates the group information associated with this $Q$-center. }
    Due to space limitations, we describe the details of the heuristic algorithm in the appendix.


    Our heuristic algorithm then runs JNN algorithm \cite{jones2020fair} on the coreset constructed to generate a feasible solution.  As in \cref{sec:heuristic of mapreduce}, we show that for large enough $Q$ the heuristic is a $(3+\epsilon)$ approximation algorithm.

    \begin{theorem}\label{thm:heuristic_streaming}
        When $Q$ is large enough, the heuristic is a  $3(1+\epsilon)$ approximation algorithm.  (see proof in appendix)
    \end{theorem}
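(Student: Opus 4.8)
The plan is to mirror the analysis of the robust streaming algorithm (\cref{lem:coreset_of_Streaming}) and of the MapReduce heuristic (\cref{thm:heuristic_mapreduce}): show that once $Q$ exceeds an explicit threshold, the group-annotated $Q$-center structure maintained by the streaming procedure is an $\bar\epsilon$-proper $(r,2)$-net of $P$, and then invoke \cref{lem:approximation_in_coreset} with the $3$-approximation JNN algorithm. Recall that running \cite{charikar2004incremental}'s incremental algorithm for the $Q$-center problem maintains a set $S(t)$ with $|S(t)|\le Q$ such that (i) $P(t)\subseteq\bigcup_{s\in S(t)}B(s,8r(t))$, (ii) $d(s_1,s_2)>4r(t)$ for distinct $s_1,s_2\in S(t)$, and (iii) $r(t)\le r^*_{kC}(P(t))$ with $Q$ centers (which is $\le r^*_{kC}(P)$ since $P(t)\subseteq P$), and $r(t)$ only ever doubles. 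Properties (i) and (ii) say precisely that $S(t)$ is a $(4r(t),2)$-net of $P(t)$, and the cluster assignment provides the partition $\mathcal T$ with $N(y,r)$ being $y$'s cluster, all within distance $8r(t)=2\cdot 4r(t)$ of $y$. The only new ingredient over \cite{charikar2004incremental} is the maintenance of $\COL$ and $\Pot$, which is updated on each point insertion and on each cluster merge exactly as in \cref{algo:coreset}; since a merge at least doubles the governing radius and $\alpha=2\ge 1$, the argument of \cref{lem:update_r} applies verbatim and shows these annotations remain accurate. Hence $Y=S(n)$ is a $(4r(n),2)$-net of $P$ with correct colour information and $|Y|\le Q$.

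It remains to show $Y$ is $\bar\epsilon$-\emph{proper}, i.e. $4r(n)\le\frac{\bar\epsilon}{2}\,r^*_{FkC}(P)$ (the condition $r\le\frac{\bar\epsilon}{\alpha}r^*_{FkC}(P)$ with $\alpha=2$). By (iii) applied at $t=n$ it suffices to bound the optimal $Q$-center radius of $P$, and this is where the doubling dimension enters. Cover $P$ by $k$ balls of radius $r^*_{kC}(P)\le r^*_{FkC}(P)$ taken from an optimal $k$-center solution; inside each such ball take a maximal $\frac{\bar\epsilon}{8}r^*_{FkC}(P)$-separated subset, which by maximality covers the ball at radius $\frac{\bar\epsilon}{8}r^*_{FkC}(P)$ and has aspect ratio $O(1/\bar\epsilon)$, hence by \cref{lem:doubling} has at most $(16/\bar\epsilon)^{O(D)}$ points. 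Therefore $P$ is covered by $k\,(16/\bar\epsilon)^{O(D)}$ balls of radius $\frac{\bar\epsilon}{8}r^*_{FkC}(P)$, so the optimal $Q$-center radius of $P$ is at most $\frac{\bar\epsilon}{8}r^*_{FkC}(P)$ whenever $Q\ge k\,(16/\bar\epsilon)^{O(D)}$; this is the quantitative meaning of ``$Q$ large enough''. Plugging in gives $4r(n)\le\frac{\bar\epsilon}{2}r^*_{FkC}(P)$, so $Y$ is an $\bar\epsilon$-proper $(4r(n),2)$-net of $P$.

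With $Y$ and its $\COL/\Pot$ annotations in hand we finish as in the robust case: feed $Y$ into \cref{lem:approximation_in_coreset} using the $O(|X|k)$-time $3$-approximation JNN algorithm of \cite{jones2020fair} for $FkCP$. Taking $\rho=3$ and $\bar\epsilon=\epsilon/3$, the lemma returns a feasible solution for $P$ of cost at most $\rho(1+3\bar\epsilon)\,r^*_{FkC}(P)=3(1+\epsilon)\,r^*_{FkC}(P)$, which is the claimed bound; the running time is dominated by the JNN call on the (at most) $O(mQ)$ expanded points.

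I expect the main obstacle to be the second paragraph: pinning down the threshold on $Q$ via the doubling dimension — arguing that enough centers force the optimal $Q$-center radius below an $\bar\epsilon$-fraction of $r^*_{FkC}(P)$ — and threading the constants ($8$ and $4$ from \cite{charikar2004incremental}, the factor $\bar\epsilon=\epsilon/3$, and the $\alpha=2$ of the net) so that the properness inequality $4r(n)\le\frac{\bar\epsilon}{2}r^*_{FkC}(P)$ holds with room to spare. A secondary, routine point is checking that the cluster-merge steps of the incremental $Q$-center algorithm, although not literally \cref{algo:coreset}, still satisfy the hypotheses of \cref{lem:update_r}, so that $\COL$ and $\Pot$ stay valid throughout the stream.
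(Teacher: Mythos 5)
Your proposal follows essentially the same route as the paper's proof: an induction showing that the group-annotated $Q$-center structure maintained via \cite{charikar2004incremental} is an $\bar\epsilon$-proper $(4r(t),2)$-net of $P(t)$ (invoking \cref{lem:update_r} at each radius doubling), followed by \cref{lem:approximation_in_coreset} with the JNN algorithm; your properness threshold $r(n)\le\frac{\bar\epsilon}{8}r^*_{FkC}(P)$ is exactly the paper's condition $r(t)\le\frac{\epsilon}{24}r^*_{FkC}(P(t))$ once $\bar\epsilon=\epsilon/3$ is substituted. The only difference is that your second paragraph supplies an explicit quantitative threshold $Q\ge k\,(16/\bar\epsilon)^{O(D)}$ via the doubling-dimension packing argument, whereas the paper merely observes that $r(t)$ and $r^*_{QC}(P(t))$ decrease as $Q$ grows and leaves ``large enough'' unquantified, so your version is, if anything, slightly more complete.
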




    \section{Experiments}
In this section, we run experiments to evaluate the performance of our heuristic one-pass and MapReduce algorithm on some real-world datasets and a massive synthetic dataset. Though the theoretical guarantee of $3+\epsilon$ for both algorithms requires the low dimensionality condition, {i.e., bounded doubling dimension,} condition, the results are still very good for the high dimensional datasets that do not satisfy those conditions.  The one-pass algorithm, despite being incremental,  achieves a similar performance ratio but with \emph{faster running time and lower memory usage} compared to the previous best algorithms. The MapReduce algorithm outputs the smallest cost solution 
in most experiments, while exhibiting \emph{a much better ratio} for the low dimensional case. 


\subsection{Datasets}


We used the same datasets and preprocessing methods as \cite{chiplunkar2020solve}, including three real world datasets: CelebA, Sushi, Adult, and one synthetic dataset: Random Euclidean. All of them use the $\ell_1$ metric with the exception of SushiA where the pairwise distance between ranking orders is calculated by the number of inverse pairs.

\textbf{Sushi}\cite{SushitData} contains $5\,000$ responses to a sushi preference survey. There are two types of evaluations given: \textbf{SushiA} contains the ranking order of $10$ kinds of sushi, and \textbf{SushiB} contains the score of $50$ kinds of sushi. The attributes given are gender and six age groupings; this results in 2 groups (gender) \footnote{We sincerely apologize for any offense caused by the binary classification of ``male'' and ``female'' in the group representations.}, 6 groups (age), or 12 groups ($\text{gender} \times \text{age}$).

 \textbf{Adult}\cite{AdultData}  contains $32\,561$ data points extracted  from the 1994 US Census database in which  education, occupation and other aspects are covered, and will be considered as $6$-dimensional features after normalizing. Using gender (2) and race (5) information, this generates groups 
 with sizes 2, 5 and  10.

 \textbf{CelebA}\cite{liu2015deep}  contains $202,599$ face images 
  After preprocessing, the $15\,360$ dimensional features are extracted via pre-trained VGG16 using Keras, and groups are divided by gender (2 groups), or $\text{gender} \times \{\text{young}, \text{not young}\}$ (4 groups). Since it is extremely high dimensional, it can test scalability of our algorithm.

 

 \textbf{Random Euclidean}\cite{chiplunkar2020solve} is a synthetic 100GB dataset designed by Chiplunkar et al. It  contains $4\,000\,000$ uniformly generated  points in $1\,000$-dimensional Euclidean space, each randomly assigned with to  of 4 groups. It is useful to illustrate the performance  of algorithms when input data is larger than memory.

\subsection{Implementation Details}
The experiments were run on a PC with AMD Ryzen 7 2700X Processor @ 3.7GHz, 32GB Memory and 500GB Solid-State Drive. We used Python to implement the algorithms, and ran experiments on the several real datasets and  massive synthetic dataset {previously described.} 

{\bf Previous algorithms} We adopted and refined Chiplunkar 
 et al. implementation\footnote{\url{https://github.com/sagark4/fair_k_center}} in order to compare algorithms \emph{fairly}: the reproduction of their results ensures our comparisons are reliable.



In this section, the streaming algorithm and the distributed algorithm from \cite{chiplunkar2020solve} are respectively labeled  as \textbf{Two Pass} and \textbf{CKR Distributed}. 
According to \cite{chiplunkar2020solve},
these two algorithms generally {outperform} \cite{chen2016matroid} and  \cite{kleindessner2019fair}.  We therefore compare  our algorithms directly to 
\cite{chiplunkar2020solve}'s algorithms, keeping the parameters the same, e.g., $\epsilon = 0.1$, as they used.
 We also followed their format of using the cost of the output of (their implementations of) Gonzalez's  algorithm as the \textbf{Lower Bound} that all of the other algorithms are compared to.


{\bf Our Implementations} (1) In our implementation of the heuristic One Pass algorithm the coreset size is set to a constant $240$. This was chosen to be divisible by the number of processors and the sum of group sizes, and also to let the  two streaming algorithms use comparable memory.  (2) Our MapReduce algorithm is implemented by a \texttt{multiprocessing} library on a single machine. The number of processors is set to $10$ to fit the CPU capacity. {For the first three datasets the size of the coreset collected by the coordinator is the same as in One Pass ($240$), but for 
Random Euclidean, we used $800$ as a coreset size to better utilize the simulated $100$ processors, where the number is chosen so that two distributed algorithms will send exactly the same number (3\,200) of points to the coordinator.

\subsection{Results}

 To evaluate the scalability of streaming algorithms, we use the first $32\,500$ points in dataset Adult;  each group was allowed at most 10 centers (denoted by capacities $[10, 10]$, i.e., $10$ men and $10$ women).  We require the algorithm to report a solution after completing reading a multiple of $2\,500$ points.

 Note that  One Pass is updating the coreset  after reading each point so far, after reading a multiple of $2\,500$ points and reporting an approximate  $k$-center solution using the JNN \footnote{We write our implementation for JNN because we fail to find JNN's original implementation.This calls  a maxflow subroutine from  the \texttt{networkx} library.} algorithm , it can continue with the new points without having to backtrack and reprocess the old ones again.
 The reported running time of the One Pass  at each checkpoint is then just the time to update the coresets and then to calculate the approximate $k$-centers.
 By comparison, Two Pass has to rerun the algorithm on the whole data set from the scratch. 
To make the comparisons between the algorithms more realistic, we also calculate the entire running time of One Pass  if it started from scratch on the dataset up to that point.

 The results are   shown in Figure \ref{fig:exp_1}. As input size grows, the two algorithms have similar solution quality when One Pass is set to use only half of Two Pass's memory. Meanwhile, One Pass shows a significantly higher efficiency over Two Pass since it can incrementally maintain coresets and obtain solutions upon request anytime. It is  worth noting that One Pass remains faster  than Two Pass even if  it is required to run from scratch.
 

\begin{figure}[!ht]
    \centering
    \includegraphics[width=70mm]{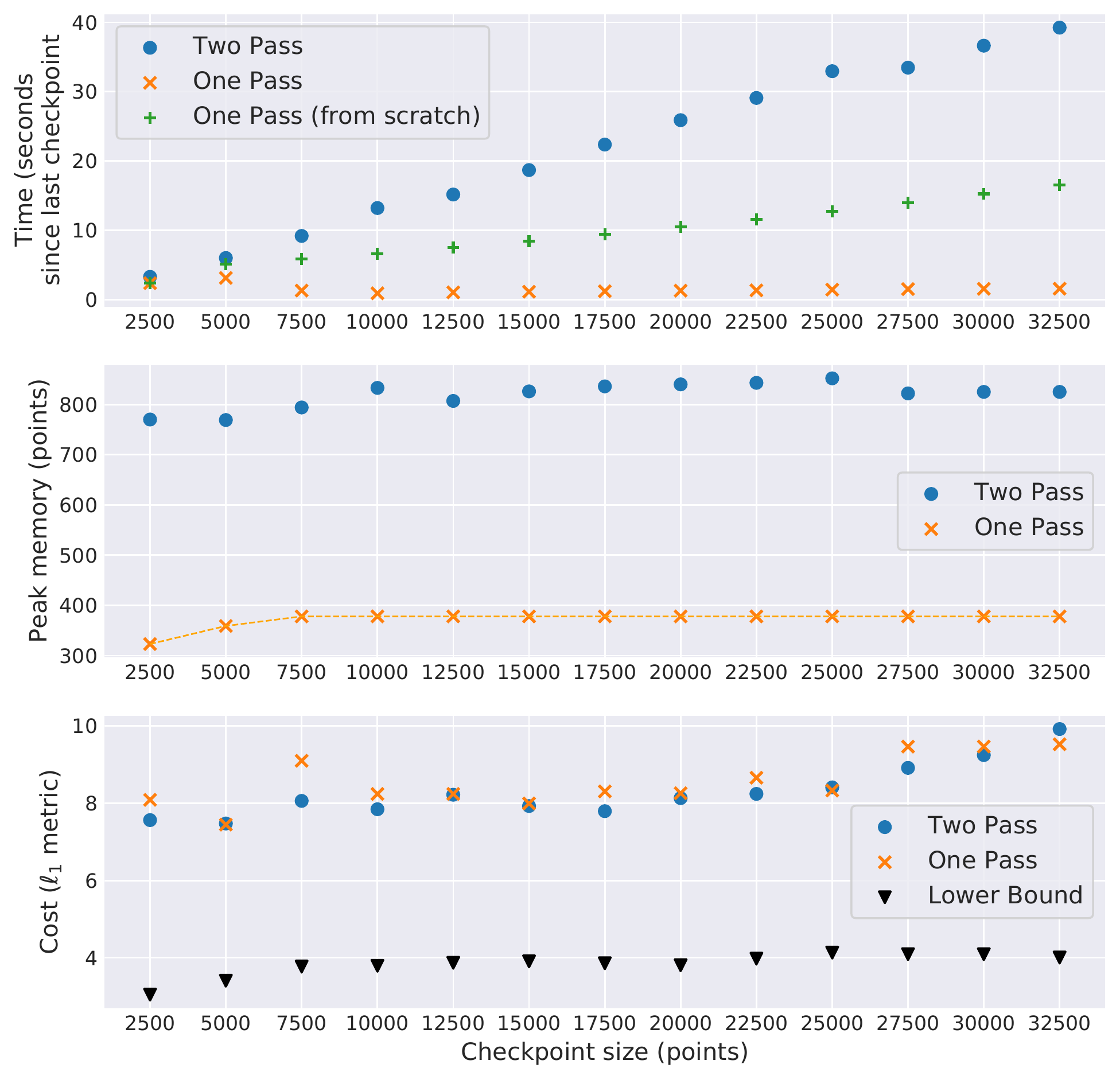}
    \caption{Checkpoint running comparison for dataset Adult with capacities: [10, 10]. Time is the average of 3 runs.}
    \label{fig:exp_1}
\end{figure}


We then ran experiments on all of the datasets.

\begin{table}[!ht]
    \centering\scalebox{0.6}{
    \begin{tabular}{|c|c|c?c|c|c?c|c|}
    \hline
        \multirow{2}{*}{Dataset} & \multirow{2}{*}{Size} & \multirow{2}{*}{Capacities} & \multicolumn{3}{c?}{Time (seconds)} & \multicolumn{2}{c|}{Memory (points)} \\
        \cline{4-8}
        & & & JNN & Two Pass& One Pass & Two Pass & One Pass \\ \hline
        \multirow{3}{*}{SushiA} & \multirow{3}{*}{5000} & [10, 10] & 12.70 & 5.88 & \bf 3.92 & 464 & \bf 429 \\ \cline{3-8}
         &  & [10] * 6  & 43.41 & 16.13 & \bf 14.24 & 1766 & \bf 933 \\ \cline{3-8}
         &  & [5] * 12  & 41.44 & \bf 15.96 & 17.07 & 2134 & \bf 1489 \\ \hline
        \multirow{3}{*}{SushiB} & \multirow{3}{*}{5000} & [10, 10] & 8.32 & \bf 2.37 & 2.62 & \bf 230 & 265 \\ \cline{3-8}
         &  & [10] * 6 & 14.6 & 7.21 & \bf 2.00 & 789 & \bf 733 \\ \cline{3-8}
         &  & [5] * 12 & 11.92 & 7.20 & \bf 1.87 & \bf 838 & 969 \\ \hline
        \multirow{3}{*}{Adult} & \multirow{3}{*}{32560} & [10, 10] & 57.88 & 39.36 & \bf 16.69 & 825 & \bf 378 \\ \cline{3-8}
         &  & [10] * 5 & 114.1 & 84.34 & \bf 30.06 & 2516 & \bf 573 \\ \cline{3-8}
         &  & [5] * 10 & 113.1 & 84.75 & \bf 30.05 & 2931 & \bf 948 \\ \hline
        \multirow{1}{*}{CelebA} & 202590 & [10, 10] & 2052 & 1350 & \bf 501.2 & 516 & \bf 431 \\ \hline
         \makecell{Random\\Euclidean} & 4e6 & [2] * 4 & -- & 5191 & \bf 1383 & \bf 52 & 314 \\ \hline
    \end{tabular}}
    \caption{\label{tab:streaming}Time and memory for streaming algorithm on all datasets. JNN algorithm could not finish in a reasonable time for Random Euclidean.}
\end{table}

\begin{table}[!ht]
    \centering\scalebox{0.6}{
    \begin{tabular}{|c|c|c?c|c|c?c|c|}
    \hline
        Dataset & Capacities & \makecell{Lower\\Bound} & JNN & Two Pass & One Pass & \makecell{CKR\\Distributed}  & \makecell{Map\\Reduce} \\ \hline
        \multirow{3}{*}{SushiA}  &  [10, 10]  &  8.00  & \cellcolor{black!25} 2.00 & \cellcolor{black!0} 2.38 & \cellcolor{black!7} 2.12 & \cellcolor{black!0} 2.50 & \cellcolor{black!7} 2.12 \\ \cline{2-8}
     &    [10] * 6  &  6.50  & \cellcolor{black!25} 2.15 & \cellcolor{black!1} 2.46 & \cellcolor{black!1} 2.46 & \cellcolor{black!0} 2.62 & \cellcolor{black!25} 2.15 \\  \cline{2-8}
     &    [5] * 12  &  6.50  & \cellcolor{black!5} 2.31 & \cellcolor{black!5} 2.31 & \cellcolor{black!25} 2.15 & \cellcolor{black!0} 2.77 & \cellcolor{black!25} 2.15 \\ \hline
\multirow{3}{*}{SushiB}  &  [10, 10]  &  34.00  & \cellcolor{black!0} 2.03 & \cellcolor{black!5} 1.85 & \cellcolor{black!25} 1.71 & \cellcolor{black!2} 1.94 & \cellcolor{black!0} 2.06 \\  \cline{2-8}
     &    [10] * 6  &  30.50  & \cellcolor{black!25} 1.93 & \cellcolor{black!16} 1.97 & \cellcolor{black!6} 2.07 & \cellcolor{black!16} 1.97 & \cellcolor{black!25} 1.93 \\  \cline{2-8}
     &    [5] * 12  &  30.50  & \cellcolor{black!25} 1.97 & \cellcolor{black!25} 1.97 & \cellcolor{black!25} 1.97 & \cellcolor{black!25} 1.97 & \cellcolor{black!25} 1.97 \\ \hline
\multirow{3}{*}{Adult}  &  [10, 10]  &  4.01  & \cellcolor{black!25} 2.08 & \cellcolor{black!1} 2.41 & \cellcolor{black!1} 2.38 & \cellcolor{black!0} 2.78 & \cellcolor{black!17} 2.12 \\  \cline{2-8}
     &     [10] * 5  &  3.04  & \cellcolor{black!25} 2.45 & \cellcolor{black!12} 2.54 & \cellcolor{black!9} 2.57 & \cellcolor{black!0} 2.93 & \cellcolor{black!15} 2.51 \\  \cline{2-8}
     &    [5] * 10  &  3.04  & \cellcolor{black!23} 2.45 & \cellcolor{black!3} 2.71 & \cellcolor{black!0} 2.93 & \cellcolor{black!3} 2.68 & \cellcolor{black!25} 2.44 \\ \hline
CelebA  &  [10, 10]  &  40796  & \cellcolor{black!10} 1.89 & \cellcolor{black!3} 1.99 & \cellcolor{black!3} 2.00 & \cellcolor{black!8} 1.91 & \cellcolor{black!25} 1.81 \\ \hline
\makecell{Random\\Euclidean} &  [2] * 4  &  --  &  --  & \cellcolor{black!24} 3.454e7 & \cellcolor{black!25} 3.450e7 & \cellcolor{black!21} 3.475e7 & \cellcolor{black!23} 3.461e7 \\ \hline
    \end{tabular}}
    \caption{\label{tab:large}Costs on all datasets. Each column after the third corresponds to an algorithm and shows the ratio of its cost and Gonzalez’s lower bound. The shaded values indicate the ratios to Lower Bound if available, darker is better.}
\end{table}

 Table \ref{tab:streaming} compares the time and memory used by the  streaming algorithms on the different datasets.   To further contrast their efficiency,  we also listed the time used by JNN algorithm, which  
used $O(n)$ memory to achieve current performance: it consumed $24$ GB memory to store points when running CelebA dataset. 


{The two streaming algorithms are both much faster than JNN, and One Pass is much faster than Two Pass for large data. We remark that in the massive case, i.e., the Random Euclidean with 4\,000\,000 points experiments, our One Pass only  requires 23 minutes, while just processing the input points needs 21.8 minutes.} It's also noticeable that the One Pass algorithm can better utilize given memory.
The memory usage of Two Pass highly depends on the aspect ratio  $\Delta$ of the data set; it uses little memory on the Random Euclidean dataset since its $\Delta$ is quite small (about 2.16) and uses much more memory for larger $\Delta$ in the other, real, datasets. Comparatively, One Pass is  more adaptive to a fixed given coreset size.


In the middle three columns of Table \ref{tab:large}, we compare the costs of different single-threaded algorithms having similar theoretical guarantees
We also observe a similar experimental performance for them, while JNN usually generates the smallest cost solution.

The last two columns of Table \ref{tab:large} compare the  two distributed algorithms. 
Both algorithms are fast: MapReduce took $23$ minutes on Random Euclidean and CKR Distributed took $27$ minutes. We do not compare the precise timing results for these two distributed algorithms, as we did not simulate the IO process in a realistic distributed environment. Therefore, the running time in our experiment may not provide much insight about the efficiency of the two algorithms in a real-world setting.


    \section{Future Direction}
    In this paper, we propose  a coreset-based algorithm framework for the fair $k$-center problem. By \cref{lem:approximation_in_coreset}, our approximation ratio for both the dynamic incremental and MapReduce algorithms will always be  essentially the same as that of the best static algorithm, which is currently $3.$ Any new improved static algorithm would therefore immediately  translate into an improvement to our algorithms. The current state of the art is that it is unknown whether $3$ is the best approximation that could be attained for the static fair $k$-center problem.  This needs to be further investigated.  In addition, our coreset techniques currently strongly require metrics with finite doubling dimensions.  Further work is needed to develop algorithms that do not have this requirement. Finally, our dynamic algorithm is only incremental and does not permit deletions. It would be useful to develop a fully dynamic fair $k$-center algorithm.

\nocite{langley00}

\newpage
\appendix
\onecolumn
\section{Some Proofs}

{\bf Proof of \cref{lem:net}}
  \begin{proof}
    Let $S$ denote a feasible set and $r=\max_{p\in P} d(p,S)$.  Since $r^*_{FkC}(P)$ is optimal, $r'\leqslant r^*_{FkC}(P)\leqslant r$.  Let $p_0=\arg \max_{p\in P} d(p,S).$ 
    From the covering  property of the $(\frac{\epsilon}{\alpha} r',\alpha)$-net, there exists a point $y'\in Y$ 
    such that $d(y',p_0)\leqslant \epsilon r'$. 
    
    From one direction,  since $Y\subseteq P$, $\max_{y\in Y}d(y,S)\leqslant \max_{p\in P}d(p,S)\leqslant (1+\epsilon)r$

    From the other,  let $s_2=\arg \min_{s\in S}d(y',s)$. Then
        $$\begin{aligned}
            \max_{y\in Y}d(y,S)&\geqslant d(y',S)=d(y',s_2)\\
            &\geqslant d(p_0,s_2)-d(y',p_0) \\
            &\geqslant  d(p_0,S)-\epsilon r' \\
            &\geqslant r-\epsilon r=(1-\epsilon)r
            \end{aligned}$$ 
    Thus, $(1-\epsilon)r \leqslant \max_{p\in Y}d(p,S)\leqslant (1+\epsilon)r$ and  $Y$ is an $\epsilon$-coreset of $P.$
    \end{proof}

{\bf Proof of \cref{lem:CandHelper}}
   \begin{proof}
   Fix $p \in P.$
   Since $Y$ is $\epsilon$-proper, $\exists y\in Y$ such that $d(y,p)\leqslant \alpha r\leqslant \epsilon r^*_{FkC}(P)$. Since $S$ is a candidate solution with cost $c,$
   $\exists s\in S,$ 
   $d(y,S)=d(y,s)\leqslant c$. 
   
   By \cref{def:candidate}, 
   $S'\subseteq P$ such that $|S'\cap N(s,r)|=1$, i.e., $\exists s'\in S'$ $d(s,s')\leqslant \alpha r\leqslant \epsilon r^*_{FkC}(P)$.Thus
     $$d(p,S')\leqslant d(p,y)+d(y,s)+d(s,s')\leqslant c + 2 \epsilon r^*_{FkC}(P).$$
   \end{proof}

{\bf Proof of \cref{lem:coreset_of_MapReduce}}
    \begin{proof} 
        Lines 2-5 of \cref{ALg:Map} is just the classical greedy $kPC$ algorithm from \cite{gonzalez1985clustering}; this is  known to compute a 2-approximate $kCP$ solution. Since the optimal value of $FkCP$ is no greater than the optimal value of $kCP$,  $r_t$ computed in line 6 is at most $\frac{1}{4}r^*_{FkC}(P_t)$. i.e., $\Bar{\epsilon} r_t\leqslant \frac{\Bar{\epsilon}}{4}r^*_{FkC}(P_t)$.
        

        It is straightforward to see that $Y_t$ is  an $(\Bar{\epsilon} r_t,2)$-net of $P_t$ with the $\COL(y)$ and $\Pot(y)$ vectors and $N(y,\Bar{\epsilon} r_t)$ sets correctly constructed for all $y \in Y_t$ with the $N(y,\Bar{\epsilon} r_t)$ sets forming a partition of $P_t.$

       Since $\Bar{\epsilon} r_t\leqslant \frac{\Bar{\epsilon}}{4}r^*_{FkC}(P_t),$  
       $Y_t$ is proper.
        
        Finally, 
        by definition of the greedy algorithm
        all points in $P_t$ can be covered by {$\bigcup_{j= 1}^k B(p_j^t,8r_t)$,} and
        $$\forall j,\quad \max_{ x,y\in B(p_j^t,4r_t)\cap Y_t}d(x,y)\leqslant 16r_t.$$
        In addition, from the condition in line 10, $\forall x,y\in Y_t$ $d(x,y)\geqslant 2\Bar{\epsilon} r_t$. The aspect ratio of $Y_t\cap B(p_j^t,8r_t)$ is thus  at most $\frac{16r_t}{2\Bar{\epsilon} r_t}\leqslant \frac{8}{\Bar{\epsilon}}$. By \cref{lem:doubling}, $|Y_t\cap B(p_j^t,8r_t)|\leqslant O((\frac{8}{\Bar{\epsilon}})^{O(D)})$. Therefore, $|Y_t|=O(k(8/\Bar{\epsilon})^D)$.
    \end{proof}

{\bf Proof of \cref{lem:property_of_greedy_algo}}
     \begin{proof}
         The proof is by contradiction. Assume $A(Q)> 2\cdot r^*_{kPC}(P)$.  Then,  $\exists q\in Q$, $d(q,S)> 2\cdot r^*_{kPC}(P)$. 
         
         Since $S$ is constructed by a greedy procedure,  $\forall s_1,s_2\in S$ $d(s_1,s_2)\geqslant d(q,S)>2\cdot r^*_{kPC}(P)$. Hence, any two points $q_1,q_2$ in $\{q\}\cup S$ satisfy  $d(q_1,q_2)>2\cdot r^*_{kPC}(P)$.
         Since $|S\cup \{q\}|=k+1,$
         in the optimal $k$-center solution of $P$, there must exist at least one center $o$ covering two points $q_1,q_2\in S\cup \{q\}$. Therefore, by the triangle inequality, we reach the contradiction 
         $$2\cdot r^*_{kPC}(P)<d(q_1,q_2)\leqslant d(q_1,o)+d(o,q_2)\leqslant 2\cdot r^*_{kPC}(P).$$
     \end{proof}

 {\bf Proof of \cref{lem:coreset_of_coordinator}}
 \begin{proof}
     Since $P_t\subset P $, by  \cref{lem:property_of_greedy_algo}, every  $r_t$ sent to the coordinator is no greater than $\frac{1}{4} r^*_{kCP}(P)$. Hence, by definition of $R$ in \cref{ALg:Coordinator}, $R\leqslant \frac{1}{2}r^*_{kCP}$. Thus, by \cref{lem:update_r}, $Y$ is an $\Bar{\epsilon}$-proper $(\frac{\Bar{\epsilon}}{2} R,2)$ net $Y$ of $P$. 

     Recall that running time of \cref{algo:coreset} is $O(m|Y_1|\cdot |Y_1\cup Y_2|)$. Therefore, the running time of \cref{ALg:Coordinator} is at most $\sum_{i=1}^\ell O(m |Y_i|\cdot |\bigcup_{j=1}^\ell Y_j|)=O(m\ell k^2 (8/\Bar{\epsilon})^{2D})$


    \end{proof}
    
{\bf Proof of \cref{thm:heuristic_mapreduce}}
    \begin{proof}
        Define $r_Q^t=\max_{p\in P_t}d(p,Y_t)$. Recall that $r^*_{FkC}(P_t)$ denotes the optimal value of the fair $k$ center problem on point set $P_t$. $r_Q^t$ is monotone decreasing  as $Q$ increases. Note that once $r_Q^t\leqslant \frac{\epsilon}{2}r^*_{FkC}(P_t)$, i.e., the coreset size $Q$ is large enough, by definition \ref{def:proper}, $Y_t$ is an $\epsilon$-proper $(\frac{r_Q^t}{2},2)$ net of $P_t$.
        {Thus, for large enough $Q$, $Y_t$ is an $\epsilon$ coreset of $P_t$ for the fair $k$ center problem.  This coreset is the same as the one constructed in \cref{ALg:Map}.}
        
        Let $O=\{o_1,...,o_k\}$ denote the optimal solution of $FkCP(P).$ Suppose that $\forall o_j\in O$ is assigned to the processor $t$, by definition of $Y_t$, $o_j\in N(y(o_j),r_Q^t)$ for some $y(o_j) \in Y_t.$ Hence, $o_j$ is from group $i$ so there exists \Pot$(y(o_j),i)$ (denoted by $o_j'$) that is sent to the coordinator such that 
       $$d(o_j',o_j)\leqslant d(o_j,y(o_j))+d(y(o_j),o_j')\leqslant 2r_Q^t.$$ 

       Let $r_Q=\max_{1\leqslant j\leqslant \ell}r_Q^j$. Thus, $\{o_1',...,o_k'\}\subseteq \bigcup_t \bigcup_{t \in Y_t} \Pot(y)$ is a feasible solution   with at most $(r^*_{FkPC}(P)+2r_Q)$ cost, which can cover all points in $P$. 

       Our heuristic runs JNN on $\bigcup_t \bigcup_{t \in Y_t} \Pot(y)$ and return a solution $S$ with cost $C(S)$.
       By the property of JNN \cite{jones2020fair}, 
       $$C(S)\leqslant 3\cdot r^*_{FkC}(\bigcup_t \bigcup_{t \in Y_t} \Pot(y))\leqslant 3(r^*_{FkC}(P)+2 r_Q).$$

       $S$ is a feasible solution covering all points in $\bigcup_t \bigcup_{t \in Y_t} \Pot(y)$. $\forall x\in P$, suppose that it is assigned to processor $t$. Then, by definition of $Y_t$, there exists $y\in Y_t$  such that $d(x,y)\leqslant r^Q_t$.
       Thus, $d(x,S)\leqslant 3(r^*_{FkC}(P)+3 r_Q)$.

        Similarly, with $Q$ increasing, $r_Q$ is  decreasing. Once $r_Q\leqslant \frac{\epsilon}{3}r^*_{FkC}(P)$, we have $$d(x,S)\leqslant 3(1+\epsilon)r^*_{FkC}(P)$$

        The theorem is proved.
    \end{proof}

    {\bf Proof of \cref{lem:coreset_of_Streaming}}
        \begin{proof}
      
         $r(t)$ is maintained by \cite{charikar2004incremental}'s incremental algorithm. Hence,  $r(t)\leqslant r^*(t)$, i.e., $\frac{\Bar{\epsilon}}{2} r(t)\leqslant \frac{\Bar{\epsilon}}{2}r^*$.

          {We now prove,  by induction, that  $Y(t)$ as constructed is an $(\frac{\Bar{\epsilon} r(t)}{2},2)$ net of first $t$ points. 
        Initially, when $t\leqslant k+1,$ $Y(t-1)$ contains all points read so far and it is obviously an $(\frac{\Bar{\epsilon} r(t-1)}{2},2)$ net $(r(t-1)=0)$. }
        
        {Suppose that after processing  $p_{t-1},$ $Y(t-1)$ is an $(\frac{\Bar{\epsilon} r(t-1)}{2},2)$ net of $P_{t-1}.$ $p_t$ is then read.
        If  \cite{charikar2004incremental} calculates that  $r(t)=2^\lambda r(t-1)$ for some $\lambda>0$,  \cref{ALg:Streaming} runs line 8 and computes a new net. }
        
        {By \cref{lem:update_r}, the new $Y(t)-1$ is a $(\frac{\Bar{\epsilon} r(t)}{2},2)$ net of $P_{t-1}.$  Then, after  running lines 10-19 in \cref{ALg:Streaming}, $Y(t)$ is a $(\frac{\Bar{\epsilon} r(t)}{2},2)$ net of $P_t.$. }

        Thus, after step $t,$, $Y(t)$ is an $(\frac{\Bar{\epsilon}}{2} r(t),2)$-net of $P_t$.
        
        When $p_t$ is read, if there exists a point $y\in Y(t)$ such that $d(p_t,y)\leqslant \Bar{\epsilon} r_t$, we can add $p$ into $N(y,\frac{\Bar{\epsilon}}{2} r(t))$ and update $\COL_g(p)(y)$ $\Pot(y,g(p))$. Otherwise, $p_t$ is added into $Y(t)$ with $N(p_t)=\{p_t\}$. 
        
         Since (1) $\Bar{\epsilon} r_t\leqslant \frac{\Bar{\epsilon}}{2}r^*$ (2) $\bigcup_{y\in Y_t} N(y)=P_t$ (3) $\forall x,y\in Y_t$ $N(x)\cap N(y)=\emptyset$, $Y_t$ is proper.

        From property 1 of \cite{charikar2004incremental}'s algorithm,  $P_t$ can be covered by $k$ balls $B(s,8 r(t))$ satisfying 
        $$\forall s \in S(t),\quad
        \max_{x,y\in B(s,8(r(t))}d(x,y)\leqslant 16 r(t).$$
        
        In addition, $\forall x,y\in Y_t,$ $d(x,y)\geqslant \frac{\Bar{\epsilon}}{2} r(t)$. The aspect ratio of each ball is then at most $\frac{2\cdot 16r(t)}{\Bar{\epsilon} r(t)}\leqslant \frac{32}{\Bar{\epsilon}}$. By \cref{lem:doubling}, the number of points in each ball is at most $O((\frac{32}{\Bar{\epsilon}})^{O(D)})$. Therefore, $|Y(n)|=O(k(32/\Bar{\epsilon})^D)$.
    \end{proof}

    {\bf Proof of \cref{thm:heuristic_streaming}}
    \begin{proof}
    By \cite{charikar2004incremental}, $r(t)$ maintained is the lower bound of $r^*_{QC}(P(t))$, i.e., the optimal value of the $Q$ center problem on point set $P(t)$. With $Q$  is increasing, both $r(t)$ and $r^*_{QC}(P(t))$ are decreasing. 
    
    Once $r(t)\leqslant \frac{\epsilon}{24}r^*_{FkC}(P(t))$ $\forall t$, we can prove, by induction, that  $Y_{r(t)}$ is an $\epsilon$-proper coreset of $P(t)$. 

    Initially, when $t\leqslant Q+1,$ $Y(t-1)$ contains all points read so far and it is obviously an $(r(t-1),2)$ net $(r(t-1)=0)$. 
    
    Suppose that after processing  $p_{t-1},$ $Y(t-1)$ is an $( r(t-1),2)$ net of $P_{t-1}.$ $p_t$ is then read.
    If  \cite{charikar2004incremental} calculates that  $r(t)=2^\lambda r(t-1)$ for some $\lambda>0$,  our heuristic apply \cref{algo:coreset} computes a new net. 
    
    By \cref{lem:update_r}, the new $Y(t)-1$ is a $( r(t),2)$ net of $P_{t-1}.$  Then, insert $t+1$th point, $Y(t)$ is a $( r(t),2)$ net of $P_t.$. 

    Thus, after step $t,$, $Y(t)$ is an $(r(t),2)$-net of $P_t$ and $Y(t)$ is $\epsilon$ proper.

    Hence, when $Q$ is large enough, by \cref{lem:approximation_in_coreset} this heuristic algorithm is a $3(1+\epsilon)$ approximation algorithm.
    \end{proof}

\section{Heuristic  streaming algorithm}

Initially, when we read the first $t\leqslant Q$ points, we record $Y_{r(t)}=P(t)$. $\forall x\in Y_{r(t)}$ from group $i$, we define $\COL_i(x)=1$ and $\Pot(x,i)=x$. 

When $t=Q+1$, we compute $r(t)=\frac{\min_{x,y\in P(t)}d(x,y)}{2}$. Then we compute a new $Y_{r(t)}\subset P(t)$ such that $\forall x,y\in Y_{r(t)}, d(x,y)>4r(t)$ and $\max_{x\in P(t)}d(x,Y_{r(t)})\leqslant 8r(t)$, i.e., $Y_{r(t)}$ is a $(4r(t),2)$-net of $P(t)$. In addition, $\forall x\in P(t)$ from group $i$, we assign $x$ to its closest point $y\in Y_{r(t)}$ and set $\COL_i(y)=1$. If $\Pot(y,i)$ is defined and $d(\Pot(y,i),y)>d(x,y)$, we update $\Pot(y,i)=x$. If $\Pot(y,i)$ is not defined, we set $\Pot(y,i)=x$.

    For $t > Q+1$, Similar to \cite{charikar2004incremental}, when we read  $p_{t+1}$ and find that it is in  group $i$, we consider the following cases:
    \begin{itemize}
        \item if there exists $y\in Y_{r(t)}$ such that $d(p_{t+1},y)\leqslant 8r(t)$,  $Y_{r(t+1)}=Y_{r(t)}$ and $r(t+1)=r(t)$. Set $\COL_i(y)=1$. \\If $\Pot(y,i)$ is defined and $d(\Pot(y,i),y)>d(x,y)$, we update $\Pot(y,i)=x$. \\If $\Pot(y,i)$ is not defined, we set $\Pot(y,i)=x$.
        \item if $\forall y\in Y_{r(t)}$ satisfying $d(p_{t+1},y)> 8r(t)$ and $|Y_{r(t)}|<Q$, $Y_{r(t+1)}=Y_{r(t)}\cup \{p_{t+1}\}$ while $r(t+1)=r(t)$.\\ Set $\COL_i(p_{t+1})=1$ and $\Pot(p_{t+1},i)=p_{t+1}$.
        \item if $\forall y\in Y_{r(t)}$ satisfying $d(p_{t+1},y)> 8r(t)$ and $|Y_{r(t)}|=Q$, we need to update the lower bound $r(t)$. \\Let $Y'(\lambda)$ be the maximal subset of  $Y_{r(t)}\cup \{p_{t+1}\}$ such that $\forall y_1,y_2\in Y'(\lambda)$ $d(y_1,y_2)> 4\cdot 2^\lambda r(t)$. \\Note  that $|Y'(0)|=|Y_{r(t)}S(t)\cup \{p_{t+1}\}|=k+1$. \\
        Now compute the smallest integer $\lambda$ such that $|Y'(\lambda)|\leqslant Q$. Then,  set $Y_{r(t+1)}=Y'(\lambda)$ and $r(t+1)=2^\lambda r(t)$.\\ $\forall y\in Y_{r(t)}/Y_{r(t+1)}$, select a $y'\in Y_{r(t+1)}$. \\Define $\COL_i(y')=\COL_i(y')\vee \COL_i(y)$. \\If $\Pot(y',i)$ is defined and $d(\Pot(y',i),y')>d(\Pot(y,i),y')$, we update $\Pot(y',i)=\Pot(y,i)$. \\If $\Pot(y,i)$ is not defined, we set $\Pot(y',i)=\Pot(y,i)$. 
    \end{itemize}

\section{Experiments on small size datasets}
\paragraph{Small Size Datasets}
{Our goal was to design algorithms for larger datasets. Our experiments were run on such datasets   and demonstrated the advantages of our algorithms when run on them. An obvious followup question would be  how they would perform on smaller datasets.}

To address this we replicate the same dataset parameters as in \cite{chiplunkar2020solve} and show the result in Table \ref{tab:small}. Indeed, our  algorithms will sometimes not  work as well as the others for small data sets.

In the streaming setting, our heuristic algorithm essentially runs  \cite{charikar2004incremental}'s incremental algorithm   for the $Q$-center problem to construct a coreset with $Q$ points. When the incremental algorithm doubles the lower bound $r(t)$, we update the current coreset with $Q$ points to a new coreset with $Q'$ points. In a  small size data set, when $r(t)$ gets large, $Q'$ could easily become very small and very small coresets can more easily yield bad approximations.

For the MapReduce setting, the 
results are generally the same as for large datasets; our algorithms will have appropriate solution ratios in most cases.



\begin{table}[!ht]
    \centering 
    \begin{tabular}{|c|c|c?c|c|c?c|c|}
    \hline
        Dataset & Capacities & \makecell{Lower\\Bound} & JNN & Two Pass & One Pass & \makecell{CKR\\Distributed}  & \makecell{Map\\Reduce} \\ \hline
        \multirow{2}{*}{CelebA}  &  [2, 2]  &  30142  & \cellcolor{black!1} 1.95 & \cellcolor{black!8} 1.76 & \cellcolor{black!2} 1.88 & \cellcolor{black!8} 1.76 & \cellcolor{black!25} 1.67 \\ \cline{2-8}
 &   [2] * 4  &  28247  & \cellcolor{black!2} 1.93 & \cellcolor{black!4} 1.88 & \cellcolor{black!1} 2.02 & \cellcolor{black!4} 1.88 & \cellcolor{black!25} 1.72 \\ \cline{1-1} \cline{2-8}
\multirow{3}{*}{SushiA}  &   [2, 2]  &  11.00  & \cellcolor{black!25} 2.00 & \cellcolor{black!25} 2.00 & \cellcolor{black!4} 2.18 & \cellcolor{black!10} 2.09 & \cellcolor{black!1} 2.27 \\ \cline{2-8}
 &   [2] * 6  &  8.50  & \cellcolor{black!25} 2.12 & \cellcolor{black!3} 2.35 & \cellcolor{black!3} 2.35 & \cellcolor{black!8} 2.24 & \cellcolor{black!8} 2.24 \\ \cline{2-8}
 &   [2] * 12  &  7.50  & \cellcolor{black!6} 2.27 & \cellcolor{black!2} 2.40 & \cellcolor{black!6} 2.27 & \cellcolor{black!2} 2.40 & \cellcolor{black!25} 2.13 \\ \cline{1-1}\cline{2-8}
 &   [2, 2]  &  35.00  & \cellcolor{black!2} 2.03 & \cellcolor{black!25} 1.80 & \cellcolor{black!25} 1.80 & \cellcolor{black!12} 1.86 & \cellcolor{black!3} 2.00 \\ \cline{2-8}
SushiB\tablefootnote{We've noticed that the SushiB result is different from Chiplunkar et al.'s results. We run their code exactly as is and produced the same result in this table. There might be issues about dataset consistencies.}  &   [2] * 6  &  32.50  & \cellcolor{black!6} 1.94 & \cellcolor{black!25} 1.82 & \cellcolor{black!13} 1.88 & \cellcolor{black!13} 1.88 & \cellcolor{black!3} 2.00 \\ \cline{2-8}
 &   [2] * 12  &  30.50  & \cellcolor{black!12} 2.00 & \cellcolor{black!12} 2.00 & \cellcolor{black!25} 1.93 & \cellcolor{black!12} 2.00 & \cellcolor{black!25} 1.93 \\ \cline{1-1}\cline{2-8}
 &   [2, 2]  &  4.90  & \cellcolor{black!0} 2.34 & \cellcolor{black!25} 1.90 & \cellcolor{black!0} 2.44 & \cellcolor{black!7} 2.02 & \cellcolor{black!0} 2.34 \\ \cline{2-8}
Adult  &   [2] * 5  &  3.92  & \cellcolor{black!0} 2.48 & \cellcolor{black!0} 2.36 & \cellcolor{black!25} 1.93 & \cellcolor{black!0} 2.35 & \cellcolor{black!1} 2.25 \\ \cline{2-8}
 &   [2] * 10  &  2.76  & \cellcolor{black!6} 2.64 & \cellcolor{black!25} 2.47 & \cellcolor{black!0} 2.95 & \cellcolor{black!2} 2.75 & \cellcolor{black!0} 2.92 \\ \hline
    \end{tabular}
    \caption{\label{tab:small}Costs on first $1\,000$ points of datasets. Number of cores is set to $40$ for distributed algorithms.}
\end{table}



\begin{thebibliography}{24}
\providecommand{\natexlab}[1]{#1}
\providecommand{\url}[1]{\texttt{#1}}
\expandafter\ifx\csname urlstyle\endcsname\relax
  \providecommand{\doi}[1]{doi: #1}\else
  \providecommand{\doi}{doi: \begingroup \urlstyle{rm}\Url}\fi

\bibitem[Angelidakis et~al.(2022)Angelidakis, Kurpisz, Sering, and
  Zenklusen]{angelidakis2022fair}
Angelidakis, H., Kurpisz, A., Sering, L., and Zenklusen, R.
\newblock Fair and fast k-center clustering for data summarization.
\newblock In \emph{International Conference on Machine Learning}, pp.\
  669--702. PMLR, 2022.

\bibitem[Bera et~al.(2019)Bera, Chakrabarty, Flores, and
  Negahbani]{bera2019fair}
Bera, S., Chakrabarty, D., Flores, N., and Negahbani, M.
\newblock Fair algorithms for clustering.
\newblock \emph{Advances in Neural Information Processing Systems}, 32, 2019.

\bibitem[Bera et~al.(2022)Bera, Das, Galhotra, and Kale]{bera2022fair}
Bera, S.~K., Das, S., Galhotra, S., and Kale, S.~S.
\newblock Fair k-center clustering in mapreduce and streaming settings.
\newblock In \emph{Proceedings of the ACM Web Conference 2022}, pp.\
  1414--1422, 2022.

\bibitem[Bercea et~al.(2019)Bercea, Gro{\ss}, Khuller, Kumar, R{\"o}sner,
  Schmidt, and Schmidt]{bercea2019cost}
Bercea, I.~O., Gro{\ss}, M., Khuller, S., Kumar, A., R{\"o}sner, C., Schmidt,
  D.~R., and Schmidt, M.
\newblock On the cost of essentially fair clusterings.
\newblock In \emph{Approximation, Randomization, and Combinatorial
  Optimization. Algorithms and Techniques (APPROX/RANDOM 2019)}. Schloss
  Dagstuhl-Leibniz-Zentrum fuer Informatik, 2019.

\bibitem[Ceccarello et~al.(2019)Ceccarello, Pietracaprina, and
  Pucci]{ceccarello2019solving}
Ceccarello, M., Pietracaprina, A., and Pucci, G.
\newblock Solving k-center clustering (with outliers) in mapreduce and
  streaming, almost as accurately as sequentially.
\newblock \emph{Proceedings of the VLDB Endowment}, 12\penalty0 (7):\penalty0
  766--778, 2019.

\bibitem[Charikar et~al.(2004)Charikar, Chekuri, Feder, and
  Motwani]{charikar2004incremental}
Charikar, M., Chekuri, C., Feder, T., and Motwani, R.
\newblock Incremental clustering and dynamic information retrieval.
\newblock \emph{SIAM Journal on Computing}, 33\penalty0 (6):\penalty0
  1417--1440, 2004.

\bibitem[Chen et~al.(2016)Chen, Li, Liang, and Wang]{chen2016matroid}
Chen, D.~Z., Li, J., Liang, H., and Wang, H.
\newblock Matroid and knapsack center problems.
\newblock \emph{Algorithmica}, 75\penalty0 (1):\penalty0 27--52, 2016.

\bibitem[Chierichetti et~al.(2017)Chierichetti, Kumar, Lattanzi, and
  Vassilvitskii]{chierichetti2017fair}
Chierichetti, F., Kumar, R., Lattanzi, S., and Vassilvitskii, S.
\newblock Fair clustering through fairlets.
\newblock \emph{Advances in Neural Information Processing Systems}, 30, 2017.

\bibitem[Chiplunkar et~al.(2020)Chiplunkar, Kale, and
  Ramamoorthy]{chiplunkar2020solve}
Chiplunkar, A., Kale, S., and Ramamoorthy, S.~N.
\newblock How to solve fair $k$-center in massive data models.
\newblock In \emph{International Conference on Machine Learning}, pp.\
  1877--1886. PMLR, 2020.

\bibitem[Dean \& Ghemawat(2008)Dean and Ghemawat]{dean2008mapreduce}
Dean, J. and Ghemawat, S.
\newblock Mapreduce: simplified data processing on large clusters.
\newblock \emph{Communications of the ACM}, 51\penalty0 (1):\penalty0 107--113,
  2008.

\bibitem[Ding et~al.(2023)Ding, Huang, Liu, Yu, and Wang]{ding2023randomized}
Ding, H., Huang, R., Liu, K., Yu, H., and Wang, Z.
\newblock Randomized greedy algorithms and composable coreset for k-center
  clustering with outliers.
\newblock \emph{arXiv preprint arXiv:2301.02814}, 2023.

\bibitem[Gonzalez(1985)]{gonzalez1985clustering}
Gonzalez, T.~F.
\newblock Clustering to minimize the maximum intercluster distance.
\newblock \emph{Theoretical Computer Science}, 38:\penalty0 293--306, 1985.

\bibitem[Heinonen et~al.(2001)]{heinonen2001lectures}
Heinonen, J. et~al.
\newblock \emph{Lectures on analysis on metric spaces}.
\newblock Springer Science \& Business Media, 2001.

\bibitem[Hochbaum \& Shmoys(1985)Hochbaum and Shmoys]{hochbaum1985best}
Hochbaum, D.~S. and Shmoys, D.~B.
\newblock A best possible heuristic for the k-center problem.
\newblock \emph{Mathematics of operations research}, 10\penalty0 (2):\penalty0
  180--184, 1985.

\bibitem[Hsu \& Nemhauser(1979)Hsu and Nemhauser]{hsu1979easy}
Hsu, W.-L. and Nemhauser, G.~L.
\newblock Easy and hard bottleneck location problems.
\newblock \emph{Discrete Applied Mathematics}, 1\penalty0 (3):\penalty0
  209--215, 1979.

\bibitem[Jones et~al.(2020)Jones, Nguyen, and Nguyen]{jones2020fair}
Jones, M., Nguyen, H., and Nguyen, T.
\newblock Fair k-centers via maximum matching.
\newblock In \emph{International Conference on Machine Learning}, pp.\
  4940--4949. PMLR, 2020.

\bibitem[Kamishima()]{SushitData}
Kamishima, T.
\newblock Sushi preference data sets.
\newblock URL \url{https://www.kamishima.net/sushi/}.

\bibitem[Kay et~al.(2015)Kay, Matuszek, and Munson]{kay2015unequal}
Kay, M., Matuszek, C., and Munson, S.~A.
\newblock Unequal representation and gender stereotypes in image search results
  for occupations.
\newblock In \emph{Proceedings of the 33rd annual acm conference on human
  factors in computing systems}, pp.\  3819--3828, 2015.

\bibitem[Kleindessner et~al.(2019)Kleindessner, Awasthi, and
  Morgenstern]{kleindessner2019fair}
Kleindessner, M., Awasthi, P., and Morgenstern, J.
\newblock Fair k-center clustering for data summarization.
\newblock In \emph{International Conference on Machine Learning}, pp.\
  3448--3457. PMLR, 2019.

\bibitem[Kohavi \& Becker()Kohavi and Becker]{AdultData}
Kohavi, R. and Becker, B.
\newblock Adult data set.
\newblock URL \url{https://archive.ics.uci.edu/ml/datasets/Adult}.

\bibitem[Krauthgamer \& Lee(2004)Krauthgamer and
  Lee]{krauthgamer2004navigating}
Krauthgamer, R. and Lee, J.~R.
\newblock Navigating nets: Simple algorithms for proximity search.
\newblock In \emph{Proceedings of the fifteenth annual ACM-SIAM symposium on
  Discrete algorithms (SODA)}, pp.\  798--807, 2004.

\bibitem[Liu et~al.(2015)Liu, Luo, Wang, and Tang]{liu2015deep}
Liu, Z., Luo, P., Wang, X., and Tang, X.
\newblock Deep learning face attributes in the wild.
\newblock In \emph{Proceedings of the IEEE international conference on computer
  vision}, pp.\  3730--3738, 2015.

\bibitem[Talwar(2004)]{talwar2004bypassing}
Talwar, K.
\newblock Bypassing the embedding: algorithms for low dimensional metrics.
\newblock In \emph{Proceedings of the thirty-sixth annual ACM symposium on
  Theory of computing}, pp.\  281--290, 2004.

\bibitem[Yuan et~al.(2021)Yuan, Diao, Du, and Liu]{yuan2021distributed}
Yuan, F., Diao, L., Du, D., and Liu, L.
\newblock Distributed fair k-center clustering problems with outliers.
\newblock In \emph{International Conference on Parallel and Distributed
  Computing: Applications and Technologies}, pp.\  430--440. Springer, 2021.

\end{thebibliography}
\end{document}